  \newtheorem{theorem}{Theorem}
  \newtheorem{lemma}{Lemma}
  \newtheorem{corollary}{Corollary}
  \theoremstyle{definition}
  \newtheorem{definition}{Definition}
\title{Sublinear Estimation of Weighted Matchings in Dynamic Data Streams\thanks{Supported by Deutsche Forschungsgemeinschaft, grant BO 2755/1-2 and within the Collaborative
Research Center SFB 876, project A2}}
\author{Marc Bury, Chris Schwiegelshohn}
\affil{Efficient Algorithms and Complexity Theory, 
TU~Dortmund, Germany \\ \{firstname.lastname\}@tu-dortmund.de}
\date{}
\newcommand{\ie}{i.\,e., }
\newcommand{\Ex}[1]{\mathbb{E}\left[#1 \right]}
\newcommand{\Cov}[1]{Cov\left[#1 \right]}
\newcommand{\Prob}[1]{\mathbb{P}\left[#1 \right]}
\newcommand{\Var}[1]{Var\left[#1 \right]}
\newcommand{\polylog}{\text{polylog~}}
\newcommand{\poly}{\text{poly~}}
\begin{document}
\maketitle

\begin{abstract}
This paper presents an algorithm for estimating the weight of a maximum weighted matching by augmenting any estimation routine for the size of an unweighted matching. The algorithm is implementable in any streaming model including dynamic graph streams. We also give the first constant estimation for the maximum matching size in a dynamic graph stream for planar graphs (or any graph with bounded arboricity) using $\tilde{O}(n^{4/5})$ space which also extends to weighted matching. Using previous results by Kapralov, Khanna, and Sudan (2014) we obtain a $\mathrm{polylog}(n)$ approximation for general graphs using $\mathrm{polylog}(n)$ space in random order streams, respectively. In addition, we give a space lower bound of $\Omega(n^{1-\varepsilon})$ for any randomized algorithm estimating the size of a maximum matching up to a $1+O(\varepsilon)$ factor for adversarial streams.
\end{abstract}

\section{Introduction}
Large graph structures encountered in social networks or the web-graph have become focus of analysis both from theory and practice. To process such large input, conventional algorithms often require an infeasible amount of running time, space or both, giving rise to other models of computation. Much theoretical research focuses on the streaming model where the input arrives one by one with the goal of storing as much information as possible in small, preferably polylogarithmic, space.
Streaming algorithms on graphs were first studied by Henzinger et al.~\cite{HRR98}, who showed that even simple problems often admit no solution with such small space requirements. The semi-streaming model \cite{FKM05} where the stream consists of the edges of a graph and the algorithm is allowed $O(n\cdot \text{polylog}(n))$ space and allows few (ideally just one) passes over the data relaxes these requirements and has received considerable attention. Problems studied in the semi-streaming model include sparsification,
spanners,
connectivity, minimum spanning trees,
counting triangles
and matching, for an overview we refer to a recent survey by McGregor~\cite{McG14}. Due to the fact that graphs motivating this research are dynamic structures that change over time there has recently been research on streaming algorithms supporting deletions. We now review the literature on streaming algorithms for matching and dynamic streams.

\paragraph{\textbf{Matching}}
Maintaining a $2$ approximation to the maximum matching (MM) in an insertion-only stream can be straightforwardly done by greedily maintaining a maximal matching \cite{FKM05}. Improving on this algorithm turns out to be difficult as Goel et al.~\cite{GKK12} showed that no algorithm using $\tilde{O}(n)$ space can achieve an approximation ratio better than $\frac{3}{2}$ which was improved by Kapralov to $\frac{e}{e-1}$~\cite{Kap13}. Konrad et al.~\cite{KMM12} gave an algorithm using $\tilde{O}(n)$ space with an approximation factor of $1.989$ if the edges are assumed to arrive in random order. 
For weighted matching (MWM), a series of results have been published \cite{FKM05,McG05,ELM11,Zel12,ELS13} with the current best bound of $4+\varepsilon$ being due to Crouch and Stubbs~\cite{CrS14}.

To bypass the natural $\Omega(n)$ bound required by any algorithm maintaining an approximate matching, recent  research has begun to focus on estimating the size of the maximum matching. Kapralov et al.~\cite{KKS14} gave a polylogrithmic approximate estimate using polylogarithmic space for random order streams. 
For certain sparse graphs including planar graphs, Esfandiari et al.~\cite{EHL15} describe how to obtain a constant factor estimation using $\tilde{O}(n^{2/3})$ space in a single pass and $\tilde{O}(\sqrt{n})$ space using two passes or assuming randomly ordered streams. The authors also gave a lower bound of $\Omega(\sqrt{n})$ for any approximation better than $\frac{3}{2}$.

\paragraph{\textbf{Dynamic Streams}}
In the turnstile model, the stream consists of a sequence of additive updates to a vector.
Problems studied in this model include numerical linear algebra problems such as regression and low-rank approximation, and maintaining certain statistics of a vector like frequency moments, heavy hitters or entropy. 
Linear sketches have proven to be the algorithmic technique of choice and might as well be the only algorithmic tool able to efficiently do so, see Li, Nguyen and Woodruff~\cite{NgW14}. Dynamic graphs as introduced and studied by Ahn, Guha and McGregor~\cite{AhG09,AGM12,AGM12a,AGM13a} are similar to, but weaker than turnstile updates. Though both streaming models assume update to the input matrix, there usually exists a consistency assumption for streams, i.e. at any given time the multiplicity of an edge is either $0$ or $1$ and edge weights cannot change arbitrarily but are first set to $0$ and then reinserted with the desired weight. The authors extend some of the aforementioned problems such as connectivity, sparsification and minimum spanning trees to this setting.
Recent results by Assadi et al.~\cite{AKLY15} showed that approximating matchings in dynamic streams is hard by providing a space lower bound of $\Omega(n^{2-3\varepsilon})$ for approximating the maximum matching within a factor of $\tilde{O}(n^{\varepsilon})$. Simultaneously, Konrad \cite{Kon15} showed a similar but slightly weaker lower bound of $\Omega(n^{3/2-4\varepsilon})$. Both works presented an algorithm with an almost matching upper bound on the space complexity of $\tilde{O}(n^{2-2\varepsilon})$ \cite{Kon15} and $\tilde{O}(n^{2-3\varepsilon})$ \cite{AKLY15}. Chitnis et al.~\cite{CCEHMMV15} gave a streaming algorithm using $\tilde{O}(k^2)$ space that returns an exact maximum matching under the assumption that the size is at most $k$. It is important to note that all these results actually compute a matching. In terms of estimating the size of the maximum matching, Chitnis et al.~\cite{CCEHMMV15} extended the estimation algorithms for sparse graphs from \cite{EHL15} to the settings of dynamic streams using $\tilde{O}(n^{4/5})$ space.
A bridge between dynamic graphs and the insertion-only streaming model is the sliding window model studied by Crouch et al.~\cite{CMS13}. The authors give a $(3+\varepsilon)$-approximation algorithm for maximum matching.

The $p$-Schatten norm of a matrix $A$ is defined as the $\ell_p$-norm of the vector of singular values.
It is well known that computing the maximum matching size is equivalent to computing the rank of the Tutte matrix~\cite{Tut47,Lov79} (see also Section \ref{sec:appl}). Estimating the maximum matching size therefore is a special case of estimating the rank or $0$-Schatten norm of a matrix. Li, Nguyen and Woodruff gave strong lower bounds on the space requirement for estimating Schatten norms in dynamic streams \cite{LNW14b}. 
Any estimation of the rank within any constant factor is shown to require $\Omega(n^2)$ space when using bi-linear sketches and $\Omega(\sqrt{n})$ space for general linear sketches. 
\begin{table}[h]
\begin{center}
\begin{tabular}{l r|c|c|c|c}
 & Reference & Graph class & Streaming model & Approx. factor & Space \\
 \hline
 \hline
\textbf{MM:} & Greedy  & General & Adversarial & $2$ & $O(n)$ \\ 
& \cite{KKS14} & General & Random & $\mathrm{polylog}(n)$ & $\mathrm{polylog}(n)$ \\
& \cite{EHL15} & Trees & Adversarial & $2+\varepsilon$ & $\tilde{O}(\sqrt{n})$ \\
& \cite{EHL15} & Bounded arboricity & Adversarial & $O(1)$ & $\tilde{O}(n^{2/3})$\\
& here & Trees & Dynamic & $2+\varepsilon$ & $O(\frac{\log^2 n}{\varepsilon^2})$\\
& here & Bounded arboricity & Dynamic & $O(1)$ & $\tilde{O}(n^{4/5})$\\
\hline
& \cite{EHL15} & Forests & Adversarial & $\frac{3}{2}-\varepsilon$ & $\Omega (\sqrt{n})$\\
& here & General & Adversarial & $1+O(\varepsilon)$ & $\Omega \left(n^{1-\varepsilon}\right)$\\
\hline
\hline
\textbf{MWM:} & \cite{CrS14} & General & Adversarial & $4+\varepsilon$ & $O(n\log^2 n)$ \\
& here & General & Random & $\mathrm{polylog}(n)$ & $\mathrm{polylog}(n)$ \\
& here & Bounded arboricity & Dynamic & $O(1)$ & $\tilde{O}(n^{4/5})$
\end{tabular}
\end{center}
\caption{Results for estimating the size (weight) of a maximum (weighted) matching in data streams.}\label{tbl:overview}
\end{table}
\subsubsection*{Techniques and Contribution}
Table \ref{tbl:overview} gives an overview of our results in comparison to previously known algorithms and lower bounds. Our first main result (Section~\ref{sec:weighted}) is an approximate estimation algorithm for the maximum weight of a matching. We give a generic procedure using any unweighted estimation as black box. In particular:
\begin{theorem}[\rm informal version]
Given a $\lambda$-approximate estimation using $S$ space, there exists an $O(\lambda^4)$-approximate estimation algorithm for the weighted matching problem using $O(S \cdot \log n)$ space.
\end{theorem}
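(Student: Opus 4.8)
The plan is to reduce the weighted problem to $O(\log n)$ instances of the unweighted one by bucketing edges according to their weight. First I would argue that only $O(\log n)$ weight scales matter: assuming (after rescaling) that all edge weights lie in $[1,W]$ with $\log W=O(\log n)$ — or, if weights are arbitrary, tracking the largest weight $w_{\max}$, noting $w(M^*)\in[w_{\max},n\,w_{\max}]$, and discarding every edge of weight below $w_{\max}/n^2$ at a $(1+o(1))$ loss — I partition the stream into classes $E_i=\{e:w(e)\in[2^i,2^{i+1})\}$, $i=0,\dots,O(\log n)$. In the dynamic model each $(\text{edge},\text{weight})$ pair is inserted and deleted atomically, so the substream consisting of the updates with weight in $[2^i,2^{i+1})$ is well defined; I run the given $\lambda$-approximate unweighted estimator on each of these substreams $G_i=(V,E_i)$ in parallel, using $O(S\log n)$ space, obtaining values $\hat m_i$ that approximate $m_i:=$ the maximum matching size of $G_i$ up to a factor $\lambda$. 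The weighted estimator is then $\mathrm{EST}=\sum_i 2^i\hat m_i$, up to an absolute normalising constant.

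Correctness rests on one purely combinatorial lemma: $\tfrac12\,w(M^*)\le\sum_i 2^i m_i\le c\cdot w(M^*)$ for an absolute constant $c$. The lower bound is immediate, since $w(M^*)=\sum_i w(M^*\cap E_i)\le\sum_i 2^{i+1}|M^*\cap E_i|\le 2\sum_i 2^i m_i$. For the upper bound I would use the greedy matching $g_1,g_2,\dots$ (repeatedly take the heaviest remaining edge), which is a $2$-approximation, so $w(M^*)\le 2\sum_\ell w(g_\ell)$. Fix a class $i$ and a maximum matching $M_i$ of $G_i$: the greedy edges of weight $\ge 2^i$ form a matching that is incident to at most two edges of $M_i$ each, and any edge of $M_i$ not blocked in this way must itself have been picked by greedy, because greedy processes edges in non-increasing weight order, so when it reaches that edge's weight its endpoints are still free. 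Hence $m_i\le 3\,|\{\ell:w(g_\ell)\ge 2^i\}|$, and summing over $i$,
\[
\sum_i 2^i m_i \;\le\; 3\sum_i 2^i\,\bigl|\{\ell:w(g_\ell)\ge 2^i\}\bigr| \;=\; 3\sum_\ell \sum_{2^i\le w(g_\ell)} 2^i \;\le\; 6\sum_\ell w(g_\ell) \;\le\; 12\,w(M^*).
\]
Plugging in the black box, $\tfrac{1}{2\lambda}w(M^*)\le\tfrac1\lambda\sum_i 2^i m_i\le\mathrm{EST}\le\sum_i 2^i m_i\le 12\,w(M^*)$, so after normalising $\mathrm{EST}$ is an $O(\lambda)$-approximation; accounting more carefully for a two-sided black-box guarantee, the truncation of light edges, the union bound over the $O(\log n)$ parallel calls, and the fact that each class itself is only estimated, the multiplicative errors compose to the stated $O(\lambda^4)$ bound.

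I expect the main obstacle to be exactly that combinatorial lemma: the ``layered'' quantity $\sum_i 2^i m_i$ sums maximum matchings taken separately at every scale, and a priori this could overshoot $w(M^*)$ by a $\Theta(\log n)$ factor (many large matchings at geometrically increasing weights that cannot be merged into a single heavy matching). The greedy-charging argument above is how I would rule that out; the delicate point is the per-class claim that every $M_i$-edge not touched by a heavier greedy pick is itself a greedy pick, which in turn is what forces $m_i=O(|\{\ell:w(g_\ell)\ge 2^i\}|)$ and lets the per-edge charge $\sum_{2^i\le w(g_\ell)}2^i\le 2w(g_\ell)$ finish the bound. Everything else — legitimacy of substream filtering, parallel composition of the black box, and the $O(\log n)$ bound on the number of relevant scales — is routine.
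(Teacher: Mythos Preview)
Your approach is correct and in fact yields a stronger bound than the paper's, via a genuinely different route. You run the black box on each weight class $E_i$ separately and output $\sum_i 2^i\hat m_i$; the paper instead runs the black box on the \emph{cumulative} sets $E_{\ge i}=\bigcup_{j\ge i}E_j$, obtaining estimates $\widehat{S_i}$, and then recovers per-level contributions as differences $\widehat{S_i}-\widehat{S_{\text{last}}}$. Because such differences of $\lambda$-approximations are unstable when the two quantities are close, the paper has to introduce thresholds $T=\Theta(\lambda^2)$ and $c=\Theta(\lambda^2)$ together with the whole ``good rank''/``significant rank'' machinery to discard unreliable differences; the cascaded losses are what produce the $O(\lambda^4)$ factor. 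Your scheme avoids differencing altogether: the combinatorial lemma $\sum_i 2^i m_i=\Theta(w(M^*))$ is exactly the right substitute, and your greedy-charging proof of it is sound (the key point, that the weight-$\ge 2^i$ prefix of greedy is a maximal matching in $E_{\ge i}$, is correct; in fact it gives $m_i\le 2|g_{\ge i}|$, so your constant $3$ is slightly loose). The black box then contributes a single factor of $\lambda$, so you actually get an $O(\lambda)$-approximation.

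The one weak spot is your final sentence, where you hand-wave that the errors ``compose to the stated $O(\lambda^4)$ bound.'' None of the items you list---two-sided black-box guarantees, truncation of light edges, the union bound over $O(\log n)$ calls, or per-class estimation---introduces an additional factor of $\lambda$; they are $O(1)$ multiplicative losses or probability adjustments already absorbed in your inequalities. You should simply claim the $O(\lambda)$ bound, which is strictly stronger than (and hence implies) the theorem as stated.
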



The previous algorithms for weighted matchings in insertion only streams analyzed in~\cite{FKM05,McG05,ELM11,Zel12} extend the greedy approach by a charging scheme. If edges are mutually exclusive, the new edge will be added if the weight of the matching increases by a given threshold, implicitly partitioning the edges into sets of geometrically increasing weights. We use a similar scheme, but with a twist: Single edge weights cannot be charged to an edge with larger weight as estimation routines do not necessarily give information on distinct edges. However, entire matchings can be charged as the contribution of a specific range of weights $r$ can only be large if these edges take up a significant part of any maximum matching in the subgraph containing only the edges of weight at least $r$. For analysis, we use a result on parallel algorithms by Uehara and Chen \cite{UeC00}. We show that the weight outputted by our algorithm is close to the weight of the matching computed by the authors, implying an approximation to the maximum weight. 

We can implement this algorithm in dynamic streams although at submission, we were unaware of any estimations for dynamic streams. Building on the work by Esfandiari et al. \cite{EHL15}, we give a constant estimation on the matching size in bounded arboricity graphs. The main obstacle to adapt their algorithms for bounded arboricity graphs is that they maintain a small size matching using the greedy algorithm which is hard for dynamic streams. Instead of maintaining a matching, we use the Tutte matrix to get a 1-pass streaming algorithm using $\tilde{O}(n^{4/5})$ space, which immediately extends to weighted matching. Similar bounds have been obtained independently by Chitnis et al. \cite{CCEHMMV15}.


Our lower bound (Section~\ref{sec:lower}) is proven via reduction from the Boolean Hidden Hypermatching problem introduced by Verbin and Yu~\cite{VerW11}. In this setting, two players Alice and Bob are given a binary $n$-bit string and a perfect $t$-hypermatching on $n$ nodes, respectively. Bob also gets a binary string $w$. The players are promised that the parity of bits corresponding to the nodes of the $i$-th hypermatching either are equal to $w_i$ for all $i$ or equal to $1-w_i$ for all $i$ and the task is to find out which case holds using only a single round of communication. We construct a graph consisting of a $t$-clique for each hyperedge of Bob's matching and a single edge for each bit of Alice's input that has one node in common with the $t$-cliques. Then we show that approximating the matching size within a factor better than $1+O(1/t)$ can also solve the Boolean Hidden Hypermatching instance. Using the lower bound of $\Omega(n^{1-1/t})$ from \cite{VerW11} we have
\begin{theorem}[\rm informal version]
\label{thm:lowerbound}
Any $1$-pass streaming algorithm approximating the size of the maximum matching matching up to an $(1+O(\varepsilon))$ factor requires $\Omega(n^{1-\varepsilon})$ bits of space.
\end{theorem}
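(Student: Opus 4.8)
The plan is to reduce from the Boolean Hidden Hypermatching problem $\mathrm{BHH}^t_n$ of Verbin and Yu, for which every bounded‑error one‑way protocol from Alice to Bob needs $\Omega(n^{1-1/t})$ bits of communication. I will take $t$ to be an even integer of order $1/\varepsilon$ and, to keep the arithmetic clean, take $n$ to be a multiple of $2t$; it is convenient to use the promise in the restricted form ``either $\bigoplus_{j\in e}x_j=0$ for every hyperedge $e$ (the YES case), or $\bigoplus_{j\in e}x_j=1$ for every $e$ (the NO case)'', which is still hard in this regime. (For general $w$ one runs essentially the same argument, with YES‑value $\tfrac n2+\tfrac12\|x\|_1-\tfrac12\|w\|_1$ and NO‑value $\tfrac n2+\tfrac12\|x\|_1+\tfrac12\|w\|_1-\tfrac{n}{2t}$, which are separated unless $\|w\|_1$ is very close to $\tfrac{n}{2t}$.)

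Between them the players build a graph $G$ on a fixed vertex set of size $2n$, with one ``left'' and one ``right'' vertex per index $j\in[n]$. Bob turns each hyperedge $e=\{j_1,\dots,j_t\}$ into a $t$‑clique on the left vertices $j_1,\dots,j_t$; Alice, for each $j$ with $x_j=1$, adds the pendant edge joining left vertex $j$ to right vertex $j$ (the other right vertices stay isolated). Since the hypermatching is perfect the cliques are vertex‑disjoint and the right vertices are private, so $G$ is a disjoint union of gadgets ``$t$‑clique with $k_e$ pendant edges at distinct clique vertices'', where $k_e=\sum_{j\in e}x_j\le t$, and $\mathrm{MM}(G)=\sum_e\mathrm{MM}(\text{gadget}_e)$. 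An elementary case analysis — in each gadget one matches all pendants and then matches the remaining clique vertices optimally, and leaving a pendant unmatched never helps — gives $\mathrm{MM}(\text{gadget}_e)=t/2+\lfloor k_e/2\rfloor$ for $t$ even. Summing and using $\sum_e k_e=\|x\|_1$ and $k_e\bmod 2=\bigoplus_{j\in e}x_j$,
\[
\mathrm{MM}(G)=\frac n2+\frac{\|x\|_1}{2}-\frac12\sum_e\Big(\bigoplus_{j\in e}x_j\Big),
\]
so $\mathrm{MM}(G)=\tfrac n2+\tfrac12\|x\|_1$ in the YES case and $\mathrm{MM}(G)=\tfrac n2+\tfrac12\|x\|_1-\tfrac{n}{2t}$ in the NO case.

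Now suppose an algorithm estimates the maximum matching size within a multiplicative factor $1+\delta$ in space $S$. Alice runs it on her pendant edges, sends the resulting memory state together with the integer $\|x\|_1$ (an extra $O(\log n)$ bits), and Bob continues the run on his clique edges and reads off the estimate. Both possible values of $\mathrm{MM}(G)$ lie in $[\tfrac n4,n]$ and differ by $\tfrac{n}{2t}$, so for $\delta$ below a suitable constant times $1/t$ the two $(1+\delta)$‑confidence intervals are disjoint, and Bob — who can compute both target values from $\|x\|_1$ — decides the instance correctly. Hence $S=\Omega(n^{1-1/t})$ bits. Choosing $t=\Theta(1/\varepsilon)$ even, the required factor is $1+O(\varepsilon)$ and the graph has $N=2n=\Theta(n)$ vertices, which yields the claimed $\Omega(N^{1-\varepsilon})$ bound (for adversarial, hence also dynamic, streams). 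Randomization is free, as the communication bound already holds for bounded‑error protocols.

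The graph‑theoretic core — the exact matching number of a clique‑with‑pendants gadget and the disjointness that makes matching numbers add — is routine. The part that needs care is the bookkeeping around the parameters: tying $t$ to $\varepsilon$ and fixing the divisibility of $n$ so that simultaneously the gap $\tfrac{n}{2t}$ exceeds the additive slack allowed by a multiplicative $(1+O(\varepsilon))$‑guarantee and the exponent comes out to exactly $1-\varepsilon$, together with confirming that the restricted all‑zero/all‑one promise (or, for general $w$, the ``$\|w\|_1$ bounded away from $\tfrac{n}{2t}$'' case) is genuinely as hard as claimed. I expect this last point to be the only genuinely delicate step.
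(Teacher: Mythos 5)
Your reduction graph (pendant edges for Alice's $1$-bits, $t$-cliques on the clique side for Bob's hyperedges), the gadget count $t/2+\lfloor k_e/2\rfloor$, and the resulting gap of $\tfrac{n}{2t}$ between the two promise cases are exactly the paper's construction, and transmitting $\|x\|_1$ with $O(\log n)$ extra bits is a legitimate shortcut around the paper's device of forcing $x$ to be balanced. However, there is a genuine gap at precisely the point you flag and then leave open: you invoke the lower bound of Verbin and Yu for the \emph{restricted} promise ``$Mx=0^{n/t}$ or $Mx=1^{n/t}$'' (i.e.\ $w=0^{n/t}$, the problem the paper calls $BHH^0_{t,n}$), but Verbin and Yu only prove hardness for general $w$, and your fallback for general $w$ does not rescue the argument. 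With general $w$ the YES/NO matching sizes differ by $\bigl|\|w\|_1-\tfrac{n}{2t}\bigr|$, which vanishes when $\|w\|_1=\tfrac{n}{2t}$; since in the hard distribution $w$ is essentially uniform, $\|w\|_1$ concentrates around $\tfrac{n}{2t}$, so the instances on which your separation fails are the typical hard instances, not a negligible fringe. The paper closes exactly this hole with a separate lemma reducing $BHH_{t,n}$ to $BHH^0_{t,4n}$: Alice replaces $x$ by $x'=[x^T\,x^T\,\overline{x}^T\,\overline{x}^T]^T$ and Bob replaces each hyperedge by four hyperedges over the copies, flipping an even or odd number of coordinates according to $w_l$, so that the value of $w$ is absorbed into the hypermatching and the promise becomes all-zero versus all-one parities (this construction also makes $\|x'\|_1$ exactly half, which is how the paper avoids sending $\|x\|_1$). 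Without this lemma, or some equivalent argument that the $w=0$ variant retains the $\Omega(n^{1-1/t})$ one-way communication bound, your proof is incomplete; with it, the rest of your argument goes through and matches the paper's, giving inapproximability $1+\Theta(1/t)$ and hence $\Omega(n^{1-\varepsilon})$ space for a $(1+O(\varepsilon))$-factor estimate with $t=\Theta(1/\varepsilon)$.
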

This lower bound also implies an $\Omega(n^{1-\varepsilon})$ space bound for $1+O(\varepsilon)$ approximating the rank of a matrix in data streams which also improves the $\Omega(\sqrt{n})$ bound by Li, Nguyen, and Woodruff \cite{LNW14b} for linear sketches.
\subsection{Preliminaries}
\label{sec:prelim}
We use $\tilde{O}(f(n))$ to hide factors polylogarithmic in $f(n)$. Any randomized algorithm succeeding with high probability has at least $1-1/n$ chance of success.
Graphs are denoted by $G(V,E,w)$ where $V$ is the set of $n$ nodes, $E$ is the set of edges and $w:E\rightarrow\mathbb{R}^+$ is a weight function. 
Our estimated value $\widehat{M}$ is a $\lambda$-approximation to the size of the maximum matching $M$ if $\widehat{M} \leq |M|\leq \lambda\widehat{M}$.

\setcounter{theorem}{0}

%
%
%
%
%

\section{Weighted Matching}
\label{sec:weighted}
We start by describing the parallel algorithm by Uehara and Chen \cite{UeC00}, see Algorithm~\ref{alg:weighted_matching_ueharachen}. Let $\gamma > 1$ and $k > 0$ be constant. We partition the edge set by $t$ ranks where all edges $e$ in rank $i \in \lbrace 1, \ldots, t \rbrace$ have a weight $w(e) \in \left( \gamma^{i-1} \cdot \frac{w_{max}}{kN}, \gamma^{i} \cdot \frac{w_{max}}{kN} \right]$ where $w_{max}$ is the maximal weight in $G$. Let $G' = (V,E,w)$ be equal to $G$ but each edge $e$ in rank $i$ has weight $r_i := \gamma^{i}$ for all $i = 1, \ldots, t$. Starting with $i = t$, we compute an unweighted maximal matching $M_i$ considering only edges in  rank $i$ (in $G'$) and remove all edges incident to a matched node. Continue with $i-1$. The weight of the matching $M = \bigcup M_i$ is $w(M) = \sum_{i=1}^t r_i \cdot \vert M_i \vert $ and satisfies $w_G(M^*) \geq w_{G'}(M) \geq \frac{1}{2 \gamma} \cdot w_G(M^*)$ where $M^*$ is an optimal weighted matching in $G$. 
The previous algorithms~\cite{FKM05,McG05,ELM11,Zel12,CrS14} for insertion-only streams use a similar partitioning of edge weights. Since these algorithms are limited to storing one maximal matching (in case of \cite{CrS14} one maximal matching per rank), they cannot compute residual maximal matchings in each rank. However, by charging the smaller edge weights into the higher ones, the resulting approximation factor can be made reasonably close to that of Uehara and Chen. 
Since these algorithms maintain matchings, they cannot have sublinear space in an insertion-only stream and they need at least $\Omega(n^{2-3\varepsilon})$ in a dynamic stream even when the maintained matching is only a $O(n^{\varepsilon})$ approximation (\cite{AKLY15}). Though the complexity for unweighted estimating unweighted matchings is not settled for any streaming model, there exist graph classes for which one can improve on these algorithms wrt space requirement. Therefore, we assume the existence of a black box $\lambda$-approximate matching estimation algorithm.

\begin{algorithm}[t]
\caption{Approximation of Weighted Matching from \cite{UeC00}}
\label{alg:weighted_matching_ueharachen}
\algorithmicrequire{ Graph $G=(V,E=\bigcup_{i=1}^t E_i)$}\\
\algorithmicensure{ Matching}
\begin{algorithmic}
\For{$i=t $ to $1$}
\State{Find a maximal matching $M_i$ in $G_i = (V,E_i)$.}
\State{Remove all edges $e$ from $E$ such that $e \in M_i$ or $e$ shares a node with an edge in $M_i$.}
\EndFor
\State{\Return $\bigcup_{i=1}^t M_i$}
\end{algorithmic}
\end{algorithm}

\subsubsection*{Algorithm and Analysis}

In order to adapt this idea to our setting, we need to work out the key properties of the partitioning and how we can implement it in a stream. The first problem is that we cannot know $w_{max}$ in a stream a priori and in a dynamic stream even maintaining $w_{max}$ is difficult. However, the appropriate partition of an inserted edge depends on $w_{max}$. Recalling the partitioning of Uehara and Chen, we disregard all edges with weight smaller than $\frac{w_{max}}{kN}$ which is possible because the contribution of these edges is at most $\frac{N}{2} \cdot \frac{w_{max}}{kN} = \frac{w_{max}}{2k} \leq \frac{OPT}{2k}$ where $OPT$ is the weight of an optimal weighted matching. Thus, we can only consider edges with larger weight and it is also possible to partition the set of edges in a logarithmic number of sets. Here, we use the properties that edge weights within a single partition set are similar and that $\frac{1}{\gamma} \leq \frac{w(e)}{w(e')} \leq \gamma$ for two edges $e \in E_i$ and $e' \in E_{i-1}$ with $i \in \lbrace 2, \ldots, t \rbrace$. These properties are sufficient to get a good approximation on the optimal weighted matching which we show in the next lemma. The proof is essentially the same as in \cite{UeC00}.

\begin{lemma}
\label{lem:properties_weight_partition}
Let $G = (V,E,w)$ be a weighted graph and $\varepsilon > 0$ be an approximation parameter. If a partitioning $E_1, \ldots, E_t$ of $E$ and a weight function $w': E \rightarrow \mathbb{R}$ satisfy
$$ \frac{1}{1+\varepsilon} \leq \frac{w'(e)}{w(e)} \leq 1 \text{ for all } e \in E \quad \text{ and } \quad \frac{w(e_1)}{w(e_2)} \leq 1+\varepsilon \quad \text{ and } \quad w(e) < w(e') $$
for all choices of edges $e_1,e_2 \in E_i$ and $e \in E_i, e' \in E_j$ with $i < j$ and $i,j \in \lbrace 1, \ldots, t \rbrace$ then Algorithm \ref{alg:weighted_matching_ueharachen} returns a matching $M = \bigcup_{i=1}^t M_i$ with 
$$ \frac{1}{2 (1+\varepsilon)^2} \cdot w(M^*) \leq w'(M) \leq w(M^*)$$ where $M^*$ is an optimal weighted matching in $G$. 
\end{lemma}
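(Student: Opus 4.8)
The plan is to run the Uehara–Chen analysis with the relaxed hypotheses, tracking where the two inequalities $\frac{1}{1+\varepsilon}\le \frac{w'(e)}{w(e)}\le 1$ and $\frac{w(e_1)}{w(e_2)}\le 1+\varepsilon$ enter. First I would fix an optimal weighted matching $M^*$ in $G$ and bound $w'(M)$ from below by comparing it to $w(M^*)$. The upper bound $w'(M)\le w(M^*)$ is the easy direction: since $M$ is a matching in $G$ and $w'\le w$ pointwise, $w'(M)\le w(M)\le w(M^*)$.

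For the lower bound, the key combinatorial step is the standard charging argument for maximal matchings processed in decreasing rank order. When we process rank $i$ (from $t$ down to $1$), $M_i$ is a \emph{maximal} matching in the subgraph $G_i$ restricted to the still-available vertices. Each edge $e^*\in M^*\cap E_i$ that survived to stage $i$ must be blocked by $M_i$ — otherwise $M_i$ would not be maximal — so at least one endpoint of $e^*$ is matched by $M_i$. Charge $e^*$ to such an incident edge of $M_i$; each edge of $M_i$ receives at most two charges from $M^*$. The edges of $M^*$ that did \emph{not} survive to stage $i$ were killed at some earlier (higher) stage $j>i$, so they are charged to edges of $M_j$ there. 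Every $e^*\in M^*$ thus gets charged to some $M_{j(e^*)}$ with $j(e^*)\ge i(e^*)$, where $i(e^*)$ is the rank of $e^*$.

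Now comes the weight accounting, which is where the hypotheses do the work. For $e^*\in M^*$ charged to $f\in M_{j}$ with $j\ge i(e^*)$: the edges in a common rank have comparable $w$-values (factor $1+\varepsilon$) and, crucially, edges in higher ranks are strictly heavier than edges in lower ranks ($w(e)<w(e')$ for $e\in E_i$, $e'\in E_j$, $i<j$), so $w(e^*)\le (1+\varepsilon)\,w(f)$. Since each $f\in M$ absorbs at most two charges, $\sum_{e^*\in M^*} w(e^*)\le 2(1+\varepsilon)\sum_{f\in M} w(f)$. Finally pass from $w$ to $w'$ on $M$ using $w'(f)\ge \frac{1}{1+\varepsilon}w(f)$, giving $w(M^*)\le 2(1+\varepsilon)^2\, w'(M)$, which rearranges to the claimed $\frac{1}{2(1+\varepsilon)^2}w(M^*)\le w'(M)$.

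The main obstacle — really the only subtle point — is making the charging argument watertight across ranks: one must be careful that the edge $f\in M_j$ to which $e^*$ is charged genuinely satisfies $w(e^*)\le(1+\varepsilon)w(f)$. This is immediate when $j=i(e^*)$ (same rank, apply the first ``$1+\varepsilon$'' hypothesis) and follows from the third hypothesis ($w(e)<w(e')$ for lower vs.\ higher rank) when $j>i(e^*)$, since then $w(f)>w(e^*)$ outright. One also has to verify the bookkeeping that the charge multiplicity is at most $2$ even though $M=\bigcup_i M_i$ is assembled from several stages — this holds because once a vertex is matched it is removed, so the $M_i$ are vertex-disjoint and each vertex, hence each $e^*$, points to a unique $f$. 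Everything else is routine.
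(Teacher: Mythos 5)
Your proof is correct and follows essentially the same route as the paper: the same charging of each $e^*\in M^*$ to an incident edge of $M$ in an equal or higher rank, the multiplicity-two bound from $M^*$ being a matching, and the final $(1+\varepsilon)$ loss from passing between $w$ and $w'$. No gaps to report.
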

\begin{proof}
The first property $ \frac{1}{1+\varepsilon} \leq \frac{w'(e)}{w(e)} \leq 1$ for all $e \in E$ implies that $\frac{w(S)}{1+\varepsilon} \leq w'(S) \leq w(S)$ for every set of edges $S \subseteq E$. Thus, it remains to show that $\frac{1}{2 (1+\varepsilon)} \cdot w(M^*) \leq w(M) \leq w(M^*)$. Since $M^*$ is an optimal weighted matching, it is clear that $w(M) \leq w(M^*)$. For the lower bound, we distribute the weight of the edges from the optimal solution to edges in $M$. Let $e \in M^*$ and $i \in \lbrace 1, \ldots, t \rbrace$ such that $e \in E_i$. We consider the following cases:
\begin{enumerate}
\item $e \in M_i$: We charge the weight $w(e)$ to the edge itself.
\item $e \not\in M_i$ but at least one node incident to $e$ is matched by an edge in $M_i$: Let $e' \in M_i$ be an edge sharing a node with $e$. Distribute the weight $w(e)$ to $e'$. 
\item $e \not\in M_i$ and there is no edge in $M_i$ sharing a node with $e$: By Algorithm \ref{alg:weighted_matching_ueharachen}, there has to be an edge $e' \in M_j$ with $j > i$ which shares a node with $e$. We distribute the weight $w(e)$ to $e'$.
\end{enumerate}
Since $M^*$ is a matching, there can only be at most two edges from $M^*$ distributing their weights to an edge in $M$. We know that $\frac{w(e)}{w(e')}  \leq 1+\varepsilon$ for all choices of two edges $e,e' \in E_i$ with $i \in \lbrace 1, \ldots, t \rbrace$ which means that in the case 2.~we have $w(e) \leq (1+\varepsilon) \cdot w(e')$. In case 3.~it holds $w(e) < w(e')$. Thus, the weight distributed to an edge $e'$ in $M$ is at most $2(1+\varepsilon)w(e')$. This implies that $w(M^*) = \sum_{e \in M^*} w(e) \leq \sum_{e' \in M} 2(1+\varepsilon) \cdot w(e') = 2(1+\varepsilon) \cdot w(M)$ which concludes the proof.
\end{proof}

Using Lemma \ref{lem:properties_weight_partition}, we can partition the edge set in a stream in an almost oblivious manner: Let $(e_0,w(e_0))$ be the first inserted edge. Then an edge $e$ belongs to $E_i$ iff $2^{i-1} \cdot w(e_0) < w(e) \leq 2^i \cdot w(e_0)$ for some $i \in \mathbb{Z}$. For the sake of simplicity, we assume that the edge weights are in $\left[ 1, W \right]$. Then the number of sets is $\O(\log W)$. We would typically expect $W\in \poly n$ as otherwise storing weights becomes infeasible.

We now introduce a bit of notation we will use in the algorithm and throughout the proof. We partition the edge set $E = \bigcup_{i=0}^t E_i$ by $t+1 = O(\log W)$ ranks where the set $E_i$ contains all edges $e$ with weight $w(e) \in \left[ 2^{i}, 2^{i+1} \right)$. Wlog we assume $E_t \neq \emptyset$ (otherwise let $t$ be the largest rank with $E_t \neq \emptyset$). Let $G' = (V,E,w')$ be equal to $G$ but each edge $e \in E_i$ has weight $w'(e) = r_i := 2^{i}$ for all $i = 0, \ldots, t$. Let $M = \bigcup_{i=0}^t M_i$ be the matching computed by the partitioning algorithm and $S$ be a $(t+1)$-dimensional vector with $S_i=\sum_{j=i}^t |M_i|$.

Algorithm~\ref{alg:weighted_matching} now proceeds as follows: For every $i\in \{0,\ldots t\}$ the size of a maximum matching in $(V,\bigcup_{j=i}^t E_j)$ and $S_i$ differ by only a constant factor. Conceptually, we set our estimator $\widehat{S_i}$ of $S_i$ to be the approximation of the size of the maximum matching of $(V,\bigcup_{j=i}^t E_i)$ and the estimator of the contribution of the edges in $E_i$ to the weight of an optimal weighted matching is $\widehat{R_i} = \widehat{S_i}-\widehat{S_{i+1}}$. The estimator $\widehat{R_i}$ is crude and generally not a good approximation to $|M_i|$. What helps us is that if the edges $M_i$ have a significant contribution to $w(M)$, then $|M_i|\gg \sum_{j=i+1}^t |M_j|=S_{i+1}$. 
In order to detect whether the matching $M_i$ has a significant contribution to the objective value, we introduce two parameters $T$ and $c$. The first matching $M_t$ is always significant (and the simplest to approximate by setting $\widehat{R_t}=\widehat{S_t}$). For all subsequent matchings $i<t$, let $j$ be the most recent matching which we deemed to be significant. We require $\widehat{S_i}\geq T\cdot \widehat{S_{j}}$ and $\widehat{R_i}\geq c\cdot \widehat{R_j}$. If both criteria are satisfied, we use the estimator $\widehat{R_i}=\widehat{S_i}-\widehat{S_{j}}$ and set $i$ to be the now most recent, significant matching, otherwise we set $\widehat{R_i}= 0$. The final estimator of the weight is $\sum_{i=0}^t r_i\cdot \widehat{R_i}$. The next definition gives a more detailed description of the two sets of ranks which are important for the analysis.
\begin{definition}[Good and Significant Ranks]
\label{def:good_significant_ranks}
Let $\widehat{S}$ and $\widehat{R}$ be the vectors at the end of Algorithm \ref{alg:weighted_matching}. An index $i$ is called to be a \emph{good rank} if $\widehat{S_i} \neq 0$ and $i$ is a \emph{significant rank} if $\widehat{R_i} \neq 0$. We denote the set of good ranks by $I_{good}$ and the set of significant ranks by $I_{sign}$, \ie $I_{good}:=\left\{i\subseteq \{0,\ldots t\}~| \widehat{S_i} \neq 0 \right\}$ and $I_{sign}:=\left\{i\subseteq \{0,\ldots t\}~| \widehat{R_i} \neq 0 \right\}$.
We define $I_{good}$ and $I_{sign}$ to be in descending order and we will refer to the $\ell$-th element of $I_{good}$ and $I_{sign}$ by $I_{good}(\ell)$ and $I_{sign}(\ell)$, respectively. That means
$ I_{good}(1) > I_{good}(2) > \ldots > I_{good}(\vert I_{good} \vert)$ and $ I_{sign}(1) > I_{sign}(2) > \ldots > I_{sign}(\vert I_{sign} \vert)$.
We slightly abuse the notation and set $I_{sign}(|I_{sign}|+1)=0$. Let $D_1 := \vert M_t \vert$ and for $\ell \in \lbrace 2, \ldots, \vert I_{sign} \vert \rbrace$ we define the sum of the matching sizes between two significant ranks $I_{sign}(\ell)$ and $I_{sign}(\ell-1)$ where the smaller significant rank is included by 
$D_{\ell}:=\sum_{i = I_{sign}(\ell)}^{I_{sign}(\ell-1)-1}|M_i|$.
\end{definition}

\begin{algorithm}[t]
\caption{Weighted Matching Approximation}
\label{alg:weighted_matching}
\algorithmicrequire{ Graph $G=(V,\bigcup_{i=0}^t E_i)$} with weights $r_i$ for edges in $E_i$\\
\algorithmicensure{ Estimator of the weighted matching}
\begin{algorithmic}
\For{$i=t ${\bf{ to }$0$}}
\State{$\widehat{S_i} = \widehat{R_i} = 0$}
\EndFor
\State{$weight = 0$, $last = t$}
\State{$\widehat{R_t} =\widehat{S_t} = \textbf{Unweighted Matching Estimation}(V, E_t)$} 
\For{$i=t-1 ${\bf{ to }$0$}}
\State{$ \widehat{S_i} = \textbf{Unweighted Matching Estimation}(V, \bigcup_{j=i}^t E_j)$}  
\If{$\widehat{S_i} > \widehat{S_{last}} \cdot T$} \hfill $\vartriangleright$ Add current index $i$ to $I_{good}$
\If{$\widehat{S_i} - \widehat{S_{last}} \geq c \cdot \widehat{R}_{last}$} \hfill $\vartriangleright$ Add current index $i$ to $I_{sign}$
\State{$\widehat{R_i} = \widehat{S_i} - \widehat{S_{last}}$}
\State{$last = i$}
\EndIf
\Else
\State{$\widehat{S_i} = 0$}
\EndIf
\EndFor
\State{\Return $\frac{2}{5}\sum\limits_{i=0}^t r_i \cdot \widehat{R_i}$}
\end{algorithmic}
\end{algorithm}

In the following, we subscript indices by $s$ for significant ranks and by $g$ for good ranks for the sake of readability.
Looking at Algorithm \ref{alg:weighted_matching} we can proof some simple properties of $I_{good}$ and $I_{sign}$.
\begin{lemma}
\label{lem:good_significant_ranks}
Let $I_{good}$ and $I_{sign}$ be defined as in Definition \ref{def:good_significant_ranks}. Then 
\begin{enumerate}
\item $I_{good}(1) = I_{sign}(1) = t$ and $I_{sign} \subseteq I_{good}$.
\item For every good rank $i_g \in I_{good}$ there is an $\ell \in \lbrace 0, \ldots, \vert I_{sign} \vert \rbrace$ such that $I_{sign}(\ell) > i_g \geq I_{sign}(\ell+1)$ and $\widehat{S_{i_g}}> T\cdot \widehat{S_{I_{sign}(\ell)}}$.
\item For every $i_s, i_s' \in I_{sign}$ with $i_s < i_s'$ it holds $\widehat{S_{i_s}} > T \cdot \widehat{S_{i_s'}}$.
\item For any $i_s \in I_{sign}$ and $i_s' \in I_{sign}$ with $i_s' < i_s$ it is $\widehat{R_{i_s'}} > c \cdot \widehat{R_{i_s}}$.
\item For any $i_s \in I_{sign}$ and $i_g \in I_{good}$ with $i_g < i_s$ it is $\widehat{S_{i_g}} > T \cdot \widehat{S_{i_s}}$.
\end{enumerate}
\end{lemma}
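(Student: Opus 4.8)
The plan is to read off all five statements directly from the control flow of Algorithm~\ref{alg:weighted_matching}, using a single structural invariant about the loop variable $last$. First I would prove, by reverse induction over the loop (which runs $i=t-1$ down to $0$), that at the moment rank $i$ is processed the value of $last$ equals the smallest index in $I_{sign}$ that is strictly larger than $i$, and that $\widehat{S_{last}}>0$. The base case is $last=t$, which lies in $I_{sign}$ because $E_t\neq\emptyset$ forces a positive estimate $\widehat{S_t}=\widehat{R_t}$; the inductive step is immediate since $last$ is overwritten by $j$ exactly when $j$ is declared significant, so --- the loop being descending --- the last value written is the smallest significant index seen so far, and positivity persists because such a $j$ passed the test $\widehat{S_j}>T\cdot\widehat{S_{last}}>0$. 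I would also record the trivial but essential fact that once the loop has moved past an index $i$, the entries $\widehat{S_i}$ and $\widehat{R_i}$ are frozen, so that conditions tested at different iterations may be compared against the final vectors $\widehat{S},\widehat{R}$. Writing $I_{sign}(\ell)$ for the value of $last$ when rank $i$ is processed, the invariant gives $I_{sign}(\ell)>i\geq I_{sign}(\ell+1)$ under the convention $I_{sign}(|I_{sign}|+1)=0$.

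Item~1 then follows: $t$ is the largest rank and lies in both $I_{sign}$ and $I_{good}$, whence $I_{good}(1)=I_{sign}(1)=t$; and $I_{sign}\subseteq I_{good}$ because for $i<t$ the entry $\widehat{R_i}$ is made nonzero only inside the branch guarded by $\widehat{S_i}>T\cdot\widehat{S_{last}}$, which both prevents the reset $\widehat{S_i}\leftarrow 0$ and forces $\widehat{S_i}>0$. For Item~2 with a good rank $i_g<t$: when $i_g$ is processed it passes $\widehat{S_{i_g}}>T\cdot\widehat{S_{last}}$, and by the invariant $last=I_{sign}(\ell)$ is the smallest significant rank exceeding $i_g$; this rank was processed earlier, so its entry is already final and the test reads exactly $\widehat{S_{i_g}}>T\cdot\widehat{S_{I_{sign}(\ell)}}$ with $I_{sign}(\ell)>i_g\geq I_{sign}(\ell+1)$. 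The degenerate case $i_g=t$ is covered by $\ell=0$ under the convention $\widehat{S_{I_{sign}(0)}}=0$, reducing the claim to $\widehat{S_t}>0$.

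Items~3 and~4 are obtained by telescoping Item~2 (respectively the inner guard) along consecutive significant ranks. Applying Item~2 to a significant rank $I_{sign}(k)$ with $k\geq2$ gives $\widehat{S_{I_{sign}(k)}}>T\cdot\widehat{S_{I_{sign}(k-1)}}$, and since $T\geq1$ this makes $\widehat{S_{I_{sign}(1)}},\widehat{S_{I_{sign}(2)}},\dots$ strictly increasing, so for significant $i_s<i_s'$, say $i_s=I_{sign}(\ell)$ and $i_s'=I_{sign}(\ell')$ with $\ell>\ell'$, chaining yields $\widehat{S_{i_s}}>T^{\ell-\ell'}\widehat{S_{i_s'}}\geq T\cdot\widehat{S_{i_s'}}$, i.e.\ Item~3. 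Likewise, every significant $i<t$ passed $\widehat{S_i}-\widehat{S_{last}}\geq c\cdot\widehat{R_{last}}$ and then set $\widehat{R_i}=\widehat{S_i}-\widehat{S_{last}}$, so $\widehat{R_{I_{sign}(k)}}\geq c\cdot\widehat{R_{I_{sign}(k-1)}}$ for $k\geq2$ (with $\widehat{R_t}$, set directly, as base case), and with $c\geq1$ the same telescoping yields Item~4. Item~5 combines Items~2 and~3: for $i_g\in I_{good}$ and $i_s\in I_{sign}$ with $i_g<i_s$, let $I_{sign}(\ell)$ be the smallest significant rank above $i_g$; then $i_s\geq I_{sign}(\ell)$, and either $i_s=I_{sign}(\ell)$, where Item~2 already gives the claim, or $i_s>I_{sign}(\ell)$, where Item~3 gives $\widehat{S_{I_{sign}(\ell)}}>T\cdot\widehat{S_{i_s}}\geq\widehat{S_{i_s}}$ and hence $\widehat{S_{i_g}}>T\cdot\widehat{S_{I_{sign}(\ell)}}>T\cdot\widehat{S_{i_s}}$.

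I do not expect a genuinely hard step: the work is bookkeeping --- pinning down the value of $last$ throughout the loop, invoking the ``frozen entries'' observation so that conditions tested at different iterations can be compared, and dealing with the boundary rank $t$ and the conventions $I_{sign}(0)$, $I_{sign}(|I_{sign}|+1)$. The one point worth flagging is strictness in Item~4: the inner guard only supplies ``$\geq$'', so to obtain the stated strict inequality one needs the extra observation that $\widehat{R}$ at a significant rank is a strict fraction of $\widehat{S}$ there (a consequence of the outer guard $\widehat{S_i}>T\cdot\widehat{S_{last}}$), which upgrades the telescoped bound; alternatively one simply carries ``$\geq$'' through, which is enough for the later analysis. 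Finally, note that Items~3--5 cannot be shortcut via monotonicity of $\widehat{S}$ in the rank: the black-box estimator is only approximate, so $\widehat{S}$ need not be monotone even though the true maximum matching sizes are, which is precisely why these items are derived from the algorithm's explicit tests.
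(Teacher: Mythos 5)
Your proof is correct and takes essentially the same route as the paper's: item 2 is read off from the guard $\widehat{S_i} > T\cdot \widehat{S_{last}}$ after identifying $last$ with the most recent significant rank (the paper does exactly this, just without spelling out the loop invariant and the boundary case $i_g=t$), and items 3--5 are obtained by chaining along consecutive significant ranks as in the paper. Your remark on item 4 is apt: the paper's own proof likewise only extracts $\widehat{R_{i_s'}} \geq c\cdot \widehat{R_{i_s}}$ from the guard and silently states it as strict, and since only the non-strict form is used later (in Corollary \ref{cor:rjestimator_dlincreasing}), simply carrying the inequality $\geq$ through is the right resolution.
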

\begin{proof}
~

\begin{enumerate}
\item It is clear that $I_{sign} \subseteq I_{good}$. Since we assumed that $E_t \neq \emptyset$, there is a nonempty matching in $E_t$ which means that $\widehat{S_t} = \widehat{R_t} > 0$. 
\item Let $\ell$ be the position of $last$ in $I_{sign}$ where $last$ is the value of the variable in Algorithm \ref{alg:weighted_matching} during the iteration $i = i_g$. Then $I_{sign}(\ell) > i_g \geq I_{sign}(\ell+1)$ (recall that we defined $I_{sign}(\vert I_{sign} \vert + 1) = 0$). Since $i_g$ is good, it is $\widehat{S}_{i_g} > T \cdot \widehat{S_{last}} = \widehat{S_{I_{sign(\ell)}}}$.
\item Since significant ranks are also good, we can apply 2.~to get $\widehat{S_{I_{sign}(\ell+1)}}> T\cdot \widehat{S_{I_{sign}(\ell)}}$ where $I_{sign}(\ell+1) < I_{sign}(\ell)$. By transitivity this implies the statement.
\item  For every $i_s \in I_{sign}$ we have $\widehat{R_{i_s}} \geq c \cdot \widehat{R_{last}}$ where $last$ is the value of the variable in Algorithm $\ref{alg:weighted_matching}$ in iteration $i = i_s$. By definition it is $last \in I_{sign}$ and $last > i_s$. Therefore, it holds $\widehat{R_{I_{sign}(\ell+1)}} > c \cdot \widehat{R_{I_{sign}(\ell)}}$ for every $\ell \in \lbrace 0, \ldots, \vert I_{sign} \vert-1 \rbrace$ which implies the statement.
\item Using 2.~we know that $\widehat{S_{i_g}}> T\cdot \widehat{S_{I_{sign}(\ell)}}$ for some $\ell \in \lbrace 0, \ldots, \vert I_{sign} \vert \rbrace$. If $i_s$ is equal to $I_{sign}(\ell)$ then we are done. Otherwise, we have $i_s > I_{sign}(\ell)$ and we can use 3.~to get $\widehat{S_{i_g}} > T \cdot \widehat{S_{I_{sign}(\ell)}} > T \cdot \widehat{S_{i_s}}$.
\end{enumerate}
\end{proof}
Now, we have the necessary notations and properties of good and significant ranks to proof our main theorem.
\begin{theorem}
\label{thm:weighted}
Let $G = (V,E,w)$ be a weighted graph where the weights are from $[1,W]$. Let $A$ be an algorithm that returns an $\lambda$-estimator $\widehat{M}$ for the size of a maximum matching $M$ of a graph with $1/\lambda \cdot \vert M \vert \leq \widehat{M} \leq \vert M \vert$ with failure probability at most $\delta$ and needs space $S$. If we partition the edge set into sets $E_0,\ldots,E_t$ with $t = \lfloor \log W \rfloor$ where $E_i$ consists of all edges with weight in $[2^i,2^{i+1})$, set $r_i = 2^i$, and use $A$ as the unweighted matching estimator in Algorithm \ref{alg:weighted_matching}, then there are parameters $T$ and $c$ depending on $\lambda$ such that the algorithm returns an $O(\lambda^4)$-estimator $\widehat{W}$ for the weight of the maximum weighted matching with failure probability at most $\delta\cdot (t+1)$ using $O(S \cdot t)$ space, i.e. there is a constant $c$ such that
$ \frac{1}{c \lambda^4} \cdot w(M^*) \leq \widehat{W} \leq w(M^*) $
where $M^*$ is an optimal weighted matching.
\end{theorem}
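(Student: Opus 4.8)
The plan is to run the $t+1$ copies of $A$ in parallel over the single pass and condition on the event that every one of them returns a valid $\lambda$-estimate; a union bound makes this fail with probability at most $\delta\cdot(t+1)$, and everything below is deterministic, using $O(S\cdot t)$ space. First I would record the two structural facts. Because each rank-$i$ edge has weight in $[2^i,2^{i+1})$ and $r_i=2^i$, the partition $E_0,\dots,E_t$ and the weight function $w'$ satisfy the hypotheses of Lemma~\ref{lem:properties_weight_partition} with $\varepsilon=1$, so the Uehara--Chen matching $M=\bigcup_i M_i$ obeys $\frac18 w(M^*)\le w'(M)=\sum_i r_i|M_i|\le w(M^*)$. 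Moreover $\bigcup_{j\ge i}M_j$ is a \emph{maximal} matching of $(V,\bigcup_{j\ge i}E_j)$, so if $\mu_i$ denotes the maximum matching size of that subgraph then $S_i\le\mu_i\le 2S_i$; combined with $\mu_i/\lambda\le\widehat{S_i}\le\mu_i$ this gives $\frac1\lambda S_i\le\widehat{S_i}\le 2S_i$ for every rank the algorithm keeps good (non-good ranks being reset to $0$). Choosing $T=10\lambda$ and $c=10\lambda$, the theorem reduces to the two inequalities $\widehat W=\frac25\sum_i r_i\widehat{R_i}\le w(M^*)$ and $\sum_i r_i\widehat{R_i}\ge w'(M)/O(\lambda^3)$.

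For the upper bound, $\sum_i r_i\widehat{R_i}=\sum_\ell r_{I_{sign}(\ell)}\widehat{R_{I_{sign}(\ell)}}$ since $\widehat{R_i}=0$ off $I_{sign}$, and $\widehat{R_{I_{sign}(\ell)}}=\widehat{S_{I_{sign}(\ell)}}-\widehat{S_{I_{sign}(\ell-1)}}\le\widehat{S_{I_{sign}(\ell)}}\le 2S_{I_{sign}(\ell)}$ (for $\ell=1$ use $\widehat R_t=\widehat S_t\le 2S_t$). By Lemma~\ref{lem:good_significant_ranks}(3) together with the estimate bounds, consecutive significant ranks satisfy $S_{I_{sign}(\ell)}\ge\frac{T}{2\lambda}S_{I_{sign}(\ell-1)}=5\,S_{I_{sign}(\ell-1)}$, hence $D_\ell\ge\frac45 S_{I_{sign}(\ell)}$ and therefore $\widehat{R_{I_{sign}(\ell)}}\le\frac52 D_\ell$. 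The blocks $[I_{sign}(\ell),I_{sign}(\ell-1)-1]$ that the $D_\ell$ sum over are disjoint and every $i$ in the $\ell$-th block has $r_i\ge r_{I_{sign}(\ell)}$, so $\sum_\ell r_{I_{sign}(\ell)}D_\ell\le\sum_i r_i|M_i|=w'(M)\le w(M^*)$. Thus $\sum_i r_i\widehat{R_i}\le\frac52 w(M^*)$, and the factor $\frac25$ in Algorithm~\ref{alg:weighted_matching} yields exactly $\widehat W\le w(M^*)$.

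The lower bound is the main obstacle: a dropped (non-significant) rank can carry a large fraction of $w'(M)$ with only a tiny matching, because the weights $r_i=2^i$ grow geometrically, so this lost weight must be charged to neighbouring significant ranks with only a $\mathrm{poly}(\lambda)$ blow-up. I would partition $\{0,\dots,t\}$ into the blocks $B_\ell=[I_{sign}(\ell),I_{sign}(\ell-1)-1]$ (with $I_{sign}(0):=t+1$) and the tail $[0,I_{sign}(|I_{sign}|)-1]$. In $B_\ell$ the significant rank $I_{sign}(\ell)$ itself satisfies $r_{I_{sign}(\ell)}|M_{I_{sign}(\ell)}|\le r_{I_{sign}(\ell)}S_{I_{sign}(\ell)}\le\frac{\lambda}{1-1/T}\,r_{I_{sign}(\ell)}\widehat{R_{I_{sign}(\ell)}}$, using $\widehat{R_{I_{sign}(\ell)}}\ge(1-\frac1T)\widehat{S_{I_{sign}(\ell)}}\ge\frac{1-1/T}{\lambda}S_{I_{sign}(\ell)}$. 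For every other $i\in B_\ell$ the variable $last$ equals $I_{sign}(\ell-1)$ while $i$ is processed, so failure of the test $\widehat{S_i}>T\cdot\widehat{S}_{last}$ or of the test $\widehat{S_i}-\widehat{S}_{last}\ge c\cdot\widehat{R}_{last}$ forces $\widehat{S_i}<(1+c)\widehat{S_{I_{sign}(\ell-1)}}$, hence $S_i\le\mu_i\le\lambda\widehat{S_i}<2\lambda(1+c)S_{I_{sign}(\ell-1)}$; summing these $|M_i|$ (a telescoping difference of $S$-values) with $r_i\le\frac12 r_{I_{sign}(\ell-1)}$ bounds their total weight by $\lambda(1+c)r_{I_{sign}(\ell-1)}S_{I_{sign}(\ell-1)}$, which I charge to the significant rank $I_{sign}(\ell-1)$ directly above, and the tail is bounded identically and charged to $I_{sign}(|I_{sign}|)$. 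Consequently each $r_{I_{sign}(\ell)}S_{I_{sign}(\ell)}$ is charged at most $1+\lambda(1+c)=O(\lambda^2)$ times, so $w'(M)\le O(\lambda^2)\sum_\ell r_{I_{sign}(\ell)}S_{I_{sign}(\ell)}\le O(\lambda^3)\sum_\ell r_{I_{sign}(\ell)}\widehat{R_{I_{sign}(\ell)}}=O(\lambda^3)\sum_i r_i\widehat{R_i}$, and combining with $w'(M)\ge\frac18 w(M^*)$ gives $\widehat W=\frac25\sum_i r_i\widehat{R_i}\ge w(M^*)/O(\lambda^4)$, which together with the upper bound completes the proof.
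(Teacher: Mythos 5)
Your argument is correct, but it takes a genuinely different route through the core charging analysis than the paper does. The paper fixes $T=8\lambda^2-2\lambda$ and $c=\frac{2}{5}T+5\lambda$ and proceeds via Lemma~\ref{lem:K-Bounds1} (two-sided bounds on $\widehat{S_{i_g}}-\widehat{S_{I_{sign}(\ell)}}$ in terms of block sums), Corollary~\ref{cor:rjestimator_dlincreasing} (the $D_\ell$ grow geometrically, enabling a geometric-series bound), and the two-case charging Lemma~\ref{lem:charge_matchings}, which bounds the \emph{sizes} of the dropped matchings by $O(\lambda T+\lambda^2)\cdot D_\ell$ and only then folds in the weights. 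You instead take $T=c=10\lambda$ (linear rather than quadratic in $\lambda$) and charge \emph{weights} directly: for the upper bound you only need the $T$-gap between consecutive significant ranks to get $\widehat{R_{I_{sign}(\ell)}}\le 2S_{I_{sign}(\ell)}\le\frac{5}{2}D_\ell$, and for the lower bound you exploit that every dropped rank fails one of the two tests (note this uses $T\le 1+c$, which your choice satisfies), telescope the $S$-values, and compensate the resulting $S_{I_{sign}(\ell-1)}$ overcharge by the factor-$2$ decay $r_i\le\frac{1}{2}r_{I_{sign}(\ell-1)}$ of the rank weights, converting $S_{I_{sign}(\ell)}$ into $\widehat{R_{I_{sign}(\ell)}}$ via $\widehat{R_{I_{sign}(\ell)}}\ge(1-1/T)\widehat{S_{I_{sign}(\ell)}}$. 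This completely avoids the K-Bounds lemma, the exponential-increase corollary and the case analysis of the charging lemma, and in fact your accounting yields an $O(\lambda^3)$ approximation (your stated $O(\lambda^4)$ is a weaker, still valid, claim), which is legitimate since the theorem only asserts the existence of suitable $T$ and $c$; what the paper's heavier machinery buys in exchange is the clean per-block guarantee $\frac{1}{2\lambda}D_\ell\le\widehat{R_{I_{sign}(\ell)}}\le\frac{5}{2}D_\ell$ for every significant rank, which your streamlined argument does not establish (and does not need).
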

\begin{proof}
In the following we condition on the event that all calls to the unweighted estimation routine succeed, which happens with probability at least $1-\delta\cdot (t+1)$.
The estimator returned by Algorithm \ref{alg:weighted_matching} can be written as $\sum_{\ell=1}^{|I_{sign}|} r_{I_{sign}(\ell)} \cdot \widehat{R_{I_{sign}(\ell)}}$. 
Using similar arguments as found in Lemma 4 of \cite{UeC00}, we have $\frac{1}{8} \cdot w(M^*) \leq \sum\limits_{i=0}^t r_i \vert M_i \vert \leq w(M^*)$.  
Thus, it is sufficient to show that $\sum_{\ell=1}^{|I_{sign}|} r_{I_{sgin}(\ell)} \cdot \widehat{R_{I_{sign}(\ell)}}$ is a good estimator for $\sum\limits_{i=0}^t r_i \vert M_i \vert$. We first consider the problem of estimating $D_{\ell}$, and then how to charge the matching sizes.

\subsubsection*{(1) Estimation of $D_\ell$}


Since $\bigcup_{j=i}^t M_j$ is a maximal matching in $\bigcup_{j=i}^t E_j$, $\widehat{S_i}$ is a good estimator for $S_i$: 

\begin{lemma}
\label{lem:aux1}
For all $i \in \lbrace 0, \ldots, t \rbrace$ we have
$\displaystyle \frac{1}{\lambda}\cdot S_i \leq \widehat{S_i} \leq 2\cdot S_i$.
\end{lemma}
\begin{proof}
Let $F_{j}$ be the set of unmatched nodes after the iteration $j$ of Algorithm \ref{alg:weighted_matching_ueharachen}. Let $M^*$ be a maximum matching in $(V,\bigcup_{j=i}^t E_{j})$. $M_{j}$ is a maximal matching of $(V,E_{j}(F_{j}))$ and therefore $\bigcup_{j=i}^t M_{j}$ is a maximal matching of $(V,\bigcup_{j=i}^t E_{j})$. This allows us to apply the bounds of the $\lambda$-approximate estimation algorithm:
$$ \displaystyle\frac{1}{\lambda}\cdot S_i=\frac{1}{\lambda}\cdot \sum_{j = i}^t |M_{j}| \leq  \displaystyle\frac{1}{\lambda}\cdot \vert M^* \vert \leq \vert \widehat{S_i} \vert \leq \vert M^* \vert \leq 2 \cdot \sum_{j = i}^t |M_{j}| = 2\cdot S_i. $$ 
\end{proof}
Next, we show that for an index $i_g \in I_{good}$ the difference $\widehat{S_{i_g}}- \widehat{S_{I_{sign}(\ell)}}$ to the last significant rank is a good estimator for $\sum_{i=i_g}^{I_{sign}(\ell)-1} |M_i|$.

\begin{lemma}
\label{lem:K-Bounds1}
For all $i_g\in I_{good}$ with $I_{sign}(\ell+1)\leq i_g < I_{sign}(\ell)$ for some $\ell \in \lbrace 1, \ldots, \vert I_{sign} \vert \rbrace$ and $T = 8\lambda^2-2\lambda$,
\[\frac{1}{2\lambda}\cdot \sum_{i=i_g}^{I_{sign}(\ell)-1} |M_i| < \widehat{S_{i_g}} - \widehat{S_{I_{sign}(\ell)}} < \frac{5}{2}\cdot \sum_{i=i_g}^{I_{sign}(\ell)-1} |M_i|\]
and $\frac{1}{\lambda} \vert M_t \vert \leq \widehat{S_t} \leq 2 \vert M_t \vert$.
\end{lemma}
\begin{proof}
For all $i_g \in I_{good}$ with $I_{sign}(\ell+1)\leq i_g < I_{sign}(\ell)$ we have
\begin{eqnarray}
\nonumber
\sum_{i=i_g}^{I_{sign}(\ell)-1} |M_i| &=& S_{i_g} - S_{I_{sign}(\ell)} 
\underset{\text{Lem.~\ref{lem:aux1}}}{\geq}  \frac{1}{2}\cdot \widehat{S_{i_g}} - \lambda \cdot \widehat{S_{I_{sign}(\ell)}} \\
\nonumber
&\underset{\text{Lem.~\ref{lem:good_significant_ranks} (2)}}{>}&  \frac{T}{2}\cdot \widehat{S_{I_{sign}(\ell)}} - \lambda \cdot \widehat{S_{I_{sign}(\ell)}}
\underset{\text{Lem.~\ref{lem:aux1}}}{\geq}  \left( \frac{T}{2}-\lambda\right)\cdot \frac{1}{\lambda}\cdot S_{I_{sign}(\ell)} \\
\label{eq:blockbound}
&=& \frac{T-2\lambda}{2\lambda}\cdot S_{I_{sign}(\ell)},
\end{eqnarray}
Setting $T= 8\lambda^2-2\lambda$, we then obtain the following upper and lower bounds
\begin{eqnarray}
\nonumber
\widehat{S_{i_g}} - \widehat{S_{I_{sign}(\ell)}} &\underset{\text{Lem. \ref{lem:aux1}}}{\geq}&\frac{1}{\lambda}\cdot S_{i_g} - 2\cdot S_{I_{sign}(\ell)} = \frac{1}{\lambda}\sum_{i=i_g}^{I_{sign}(\ell)-1}|M_i|-\left(2-\frac{1}{\lambda}\right) \cdot S_{I_{sign}(\ell)}\\
\nonumber
&\underset{\text{Eq.}~\ref{eq:blockbound}}{>}&  \frac{1}{\lambda}\sum_{i=i_g}^{I_{sign}(\ell)-1}|M_i|-\left(2-\frac{1}{\lambda}\right) \cdot \frac{2\lambda}{T-2\lambda} \cdot \sum_{i=i_g}^{I_{sign}(\ell)-1} |M_i| 
\end{eqnarray}
\begin{eqnarray}
\nonumber
\Rightarrow\widehat{S_{i_g}} - \widehat{S_{I_{sign}(\ell)}}& =& \frac{1}{\lambda}\sum_{i=i_g}^{I_{sign}(\ell)-1}|M_i|-\left(\frac{2\lambda-1}{\lambda}\right) \cdot \frac{2\lambda}{8\lambda^2-4\lambda} \cdot \sum_{i=i_g}^{I_{sign}(\ell)-1} |M_i| \\
\nonumber
&=& \left(\frac{1}{\lambda}- \frac{2}{4\lambda}\right)\sum_{i=i_g}^{I_{sign}(\ell)-1} |M_i|\\
\nonumber
&=& \frac{1}{2\lambda}\cdot\sum_{i=i_g}^{I_{sign}(\ell)-1} |M_i|
\end{eqnarray}
and
\begin{eqnarray}
\nonumber
 \widehat{S_{i_g}} - \widehat{S_{I_{sign}(\ell)}} &\underset{\text{Lem. \ref{lem:aux1}}}{\leq}&2\cdot S_{i_g} - \frac{1}{\lambda}\cdot S_{I_{sign}(\ell)} = 2\cdot \sum_{i=i_g}^{I_{sign}(\ell)-1}|M_i|+\left(2-\frac{1}{\lambda}\right) \cdot S_{I_{sign}(\ell)}\\
\nonumber
&\underset{\text{Eq.}~\ref{eq:blockbound}}{<}&  2\sum_{i=i_g}^{I_{sign}(\ell)-1}|M_i| + \left(2-\frac{1}{\lambda}\right) \cdot \frac{2\lambda}{8\lambda^2-4\lambda} \cdot \sum_{i=i_g}^{I_{sign}(\ell)-1} |M_i| \\
\nonumber
&=& \left(2 + \frac{2}{4\lambda}\right)\sum_{i=i_g}^{I_{sign}(\ell-1)-1} |M_i|\\
\nonumber
&\leq& \frac{5}{2}\cdot \sum_{i=i_g}^{I_{sign}(\ell-1)-1} |M_i|,
\end{eqnarray}
where we used $\lambda\geq 1$ in the last inequality. Since $\vert M_t \vert = S_t$ and $\widehat{R_t} = \widehat{S_t}$, the last statement follows directly from Lemma \ref{lem:aux1}.
\end{proof}
From Lemma \ref{lem:good_significant_ranks}.1 we know that $I_{sign} \subseteq I_{good}$ which together with the last Lemma \ref{lem:K-Bounds1} implies that $\widehat{R_{I_{sign}(\ell)}}$ is a good estimator for $D_\ell$.

\begin{corollary}
\label{cor:rjestimator_dlincreasing}
For $\ell \in \lbrace 1, \ldots, \vert I_{sign} \vert \rbrace$,
$\frac{1}{2\lambda}\cdot D_{\ell} \leq \widehat{R_{J(\ell)}} \leq \frac{5}{2}\cdot D_{\ell} $.
Furthermore, if $c > 5 \lambda$ then the values of the $D_\ell$ are exponentially increasing:
$$ D_{1}\leq \frac{5\lambda}{c} D_{2} \leq \ldots \leq \left(\frac{5\lambda}{c}\right)^{\vert I_{sign} \vert - 1} D_{\vert I_{sign} \vert - 1}. $$
\end{corollary}
\begin{proof}
Recall that for $\ell \in \lbrace 2,\ldots,\vert I_{sign} \vert \rbrace$ we defined $D_\ell = \sum_{i = I_{sign}(\ell)}^{I_{sign}(\ell-1)+1} \vert M_i \vert$. For $\ell = 1$ the value of $\widehat{R_{I_{sign}(1)}} = \widehat{S_t}$ is a good estimator for the size of the matching $M_t$ (which is equal to $D_1$) due to Lemma \ref{lem:aux1}. Since for $\ell \in \lbrace 2, \ldots, \vert I_{sign} \vert \rbrace$ it is $\widehat{R_{I_{sign}(\ell)}} = S_{I_{sign}(\ell)}-S_{I_{sign}(\ell-1)}$ and $I_{sign} \subseteq I_{good}$, the first statement is a direct implication of Lemma \ref{lem:K-Bounds1} by setting $i_g = I_{sign}(\ell)$.

For three adjoining significant ranks $I_{sign}(\ell+1),I_{sign}(\ell),I_{sign}(\ell-1)$ with $\ell \in \lbrace 2,\ldots, \vert I_{sign} \vert - 1 \rbrace$, we have
\begin{eqnarray}
\nonumber
\frac{1}{2\lambda}\cdot D_{\ell} = \frac{1}{2\lambda} \sum_{i=I_{sign}(\ell)}^{I_{sign}(\ell-1)-1} |M_i| & \underset{\text{Lem.~\ref{lem:K-Bounds1}}}{\leq} & \widehat{S_{I_{sign}(\ell)}} - \widehat{S_{I_{sign}(\ell-1)}} = \widehat{R_{I_{sign}(\ell)}} \\
\nonumber & \underset{\text{Lem.~\ref{lem:good_significant_ranks} (4)}}{\leq} & \frac{1}{c} \cdot \widehat{R_{I_{sign}(\ell+1)}} = \frac{1}{c} \cdot \left(\widehat{S_{I_{sign}(\ell+1)}} - \widehat{S_{I_{sign}(\ell)}}\right)\\
\nonumber
& \underset{\text{Lem.~\ref{lem:K-Bounds1}}}{\leq } & \frac{5}{2 c} \sum_{i=I_{sign}(\ell+1)}^{I_{sign}(\ell)-1} |M_i| = \frac{5}{2 c}\cdot D_{\ell+1}. 
\end{eqnarray}
Since $D_1 = \vert M_t \vert$ and $\widehat{R_{I_{sign}(1)}} = \widehat{R_{t}} = \widehat{S_t}$, Lemma \ref{lem:K-Bounds1} also implies that 
$$\frac{1}{2\lambda} \cdot D_1 \leq \frac{1}{\lambda} \cdot D_1 \leq \widehat{R_t} \leq \frac{1}{c} \cdot \widehat{R_{I_{sign}(2)}} \leq \frac{5}{2 c} \sum_{i=I_{sign}(2)}^{I_{sign}(1)-1} |M_i| = \frac{5}{2 c}\cdot D_{2}. $$
Thus, for $c>5\lambda$ the values of the $D_{\ell}$ are exponentially increasing:
$$
D_{1}\leq \frac{5\lambda}{c} D_{2} \leq \ldots \leq \left(\frac{5\lambda}{c}\right)^{\vert I_{sign} \vert - 1} D_{\vert I_{sign} \vert - 1}.
$$
\end{proof}

\subsubsection*{(2) The Charging Argument}
We show that the sum of the matching sizes between two significant ranks $I_{sign}(\ell+1)$ and $I_{sign}(\ell)$ is bounded by $O(\lambda \cdot T \cdot D_\ell) = O\left(\lambda \cdot T \cdot \sum_{i=I_{sign}(\ell)}^{I_{sign}(\ell-1)+1} \vert M_i \vert \right)$.

\begin{lemma}
\label{lem:charge_matchings}
Setting $c = \frac{2}{5} \cdot T + 5 \lambda$ in Algorithm \ref{alg:weighted_matching}. Then for $\ell \in \lbrace 1,\ldots,\vert I_{sign} \vert-1 \rbrace$,
$\displaystyle \sum_{i=I_{sign}(\ell+1)+1}^{I_{sign}(\ell)-1}|M_i| \leq (2\lambda\cdot T + 25\lambda^2)\cdot D_{\ell}$
and $\displaystyle \sum_{i=0}^{I_{sign}(|I_{sign}|)-1}|M_i| \leq (2\lambda\cdot T + 25\lambda^2)\cdot D_{|I_{sign}|}$ if $0 \not\in I_{sign}$.
\end{lemma}
\begin{proof}
For the proof of the first inequality, let $i_g \in I_{good}$ be minimal such that $I_{sign}(\ell+1)<i_g<I_{sign}(\ell)$ for $\ell \in \lbrace 1,\ldots,\vert I_{sign} \vert-1 \rbrace$. If such a good rank does not exist, set $i_g = -1$. We distinguish between two cases. Note that $c = \frac{5}{2} \cdot T + 5 \lambda > 5\lambda$.
\begin{description}
\item[Case 1: $i_g = I_{sign}(\ell+1)+1$.] For the sake of simplicity, we abuse the notation and set $\widehat{S_{I_{sign}(0)}} = 0$ such that $\widehat{R_{I_{sign}(\ell)}} = \widehat{S_{I_{sign}(\ell)}}-\widehat{S_{I_{sign}(\ell-1)}}$ also holds for $\ell = 1$. Using Lemma~\ref{lem:K-Bounds1} we have
\begin{eqnarray}
\nonumber
\sum_{i=I_{sign}(\ell+1)+1}^{I_{sign}(\ell)-1}|M_i| & = & \sum_{i=i_g}^{I_{sign}(\ell)-1}|M_i| \underset{\text{Lem.~\ref{lem:K-Bounds1}}}{\leq} 2\lambda\cdot\left( \widehat{S_{i_g}} - \widehat{S_{I_{sign}(\ell)}}\right)\\
\nonumber & \underset{i_g \not\in I_{sign}}{\leq} & 2\lambda c\cdot \widehat{R_{I_{sign}(\ell)}} = 2\lambda\cdot c \cdot \left( \widehat{S_{I_{sign}(\ell)}}-\widehat{S_{I_{sign}(\ell-1)}} \right) \\
& \underset{\text{Lem.~\ref{lem:K-Bounds1}}}{\leq} & 5\lambda\cdot c\cdot \sum_{i=I_{sign}(\ell)}^{I_{sign}(\ell-1)-1} |M_i| = 5\cdot c\cdot D_{\ell} \label{eq:case1a}
\end{eqnarray}

\item[Case 2: $i_g\neq I_{sign}(\ell+1)+1$.] In this case $\widehat{S_{I_{sign}(\ell+1)+1}}\leq T\cdot \widehat{S_{I_{sign}(\ell)}}$. Thus
\begin{eqnarray}
\nonumber
\sum_{i=I_{sign}(\ell+1)+1}^{I_{sign}(\ell)-1}|M_i|&\leq &  S_{I_{sign}(\ell+1)+1} \underset{\text{Lem.~\ref{lem:aux1}}}{\leq} \lambda\cdot \widehat{S_{I_{sign}(\ell+1)+1}} \\
\nonumber
&\leq & \lambda\cdot T\cdot \widehat{S_{I_{sign}(\ell)}} \underset{\text{Lem.~\ref{lem:aux1}}}{\leq} 2\lambda \cdot T\cdot S_{I_{sign}(\ell)} = 2\lambda\cdot T\cdot \sum_{i=1}^{\ell} D_{i}\\
&\underset{\text{Cor.~\ref{cor:rjestimator_dlincreasing}}}{\leq} & 2\lambda \cdot T\cdot D_{\ell}\cdot \sum_{i=1}^{\ell} \left(\frac{5\lambda}{c}\right)^{i} \leq 2\lambda \cdot T\cdot D_{\ell}\cdot\frac{1}{1-\frac{5\lambda}{c}}
\label{eq:case2a}
\end{eqnarray}
\end{description}
Combining the inequalities~\ref{eq:case1a} and~\ref{eq:case2a}, we have $\sum_{i=I_{sign}(\ell+1)+1}^{I_{sign}(\ell)-1}|M_i| \leq \linebreak \max\left\{5\lambda \cdot c,\frac{2\lambda\cdot T}{1-\frac{5\lambda}{c}}\right\}\cdot D_{\ell}$ which simplifies to
\begin{equation*}
\label{eq:Sbound1}
\sum_{i=I_{sign}(\ell+1)+1}^{I_{sign}(\ell)-1}|M_i| \leq (2\lambda\cdot T + 25\lambda^2)\cdot D_{\ell} \qquad \text{ for } \ell \in \lbrace 1,\ldots,\vert I_{sign} \vert-1 \rbrace.
\end{equation*}
If $0 \not\in I_{sign}$ we can do the same arguments to bound $\sum_{i=0}^{I_{sign}(|I_{sign}|)-1}|M_i|$ by $(2\lambda\cdot T + 25\lambda^2)\cdot D_{|I_{sign}|}$. 
Let $i_g \in I_{good}$ be minimal such that $0\leq i_g < I_{sign}(|I_{sign}|)$. Again, we distinguish between two cases.

\begin{description}
\item[Case 1: $i_g = 0$.] Using Lemma~\ref{lem:K-Bounds1} we have
\begin{eqnarray*}
\nonumber
\frac{1}{2\lambda}\cdot\sum_{i=0}^{I_{sign}(|I_{sign}|)-1}|M_i| & \underset{\text{Lem.~\ref{lem:K-Bounds1}}}{\leq} &  \widehat{S_{0}} - \widehat{S_{I_{sign}(|I_{sign}|)}}\\
\nonumber & \underset{0 \not\in I_{sign}}{\leq} & c\cdot \widehat{R_{I_{sign}(|I_{sign}|)}} = c \cdot \left( \widehat{S_{I_{sign}(|I_{sign}|)}}-\widehat{S_{I_{sign}(|I_{sign}|-1)}} \right) \\
\nonumber & \underset{\text{Lem.~\ref{lem:K-Bounds1}}}{\leq} & \frac{5}{2}\cdot c\cdot \sum_{i=I_{sign}(|I_{sign}|)}^{I_{sign}(|I_{sign}|-1)-1} |M_i| = \frac{5}{2}\cdot c\cdot D_{|I_{sign}|} \\
\Leftrightarrow \sum_{i=0}^{I_{sign}(|I_{sign}|)-1}|M_i| & \leq & 5\lambda \cdot c \cdot D_{|I_{sign}|}
\end{eqnarray*}
\item[Case 2: $i_g \neq 0$.] In this case $\widehat{S_{0}}\leq T\cdot \widehat{S_{I_{sign}(|I_{sign}|)}}$. Thus
\begin{eqnarray*}
\nonumber
\frac{1}{\lambda}\cdot \sum_{i=0}^{I_{sign}(|I_{sign}|)-1}|M_i|&\leq & \frac{1}{\lambda}\cdot S_{0} \underset{\text{Lem.~\ref{lem:aux1}}}{\leq} \widehat{S_{0}} \leq T\cdot \widehat{S_{I_{sign}(|I_{sign}|)}} \\
\nonumber
& \underset{\text{Lem.~\ref{lem:aux1}}}{\leq} & 2 \cdot T\cdot S_{I_{sign}(|I_{sign}|)} = 2\cdot T\cdot \sum_{i=1}^{|I_{sign}|} D_{i}\\
\nonumber
&\underset{\text{Cor.~\ref{cor:rjestimator_dlincreasing}}}{\leq} & 2 \cdot T\cdot D_{|I_{sign}|}\cdot \sum_{i=1}^{|I_{sign}|} \left(\frac{5\lambda}{c}\right)^{i} \leq 2 \cdot T\cdot D_{|I_{sign}|}\cdot\frac{1}{1-\frac{5\lambda}{c}}\\
\Leftrightarrow \sum_{i=0}^{I_{sign}(|I_{sign}|)-1}|M_i| &\leq & \frac{2\lambda\cdot T}{1-\frac{5\lambda}{c}} \cdot D_{|I_{sign}|}.
\end{eqnarray*}
\end{description}
Now, with the same $c = \frac{2}{5}\cdot T +5\lambda$ as before we have 
\begin{equation*}
\sum_{i=0}^{I_{sign}(|I_{sign}|)-1}|M_i| \leq (2\lambda\cdot T + 25\lambda^2)\cdot D_{|I_{sign}|}.
\end{equation*} 
\end{proof}
We use Lemma~\ref{lem:charge_matchings} to show that $w(M)$ is bounded in terms of $\sum_{\ell=1}^{|I_{sign}|}r_{I_{sign}(\ell)}\cdot D_{\ell} $:
\begin{eqnarray}
\sum\limits_{i=0}^t r_i \cdot \vert M_i \vert  & \geq & \sum_{\ell=1}^{|I_{sign}|}r_{I_{sign}(\ell)}\cdot D_{\ell} \label{eq:OPTlower1}\\
\sum\limits_{i=0}^t r_i \cdot \vert M_i \vert 
& \leq & (1+2\lambda\cdot T +25\lambda^2)\cdot \sum_{\ell=1}^{|I_{sign}|}r_{I_{sign}(\ell)}\cdot D_{\ell}.\label{eq:OPTupper1}
\end{eqnarray} 

%
\subsubsection*{Putting Everything Together}
Using Corollary~\ref{cor:rjestimator_dlincreasing} we have
$\frac{1}{2\lambda}\cdot D_{\ell} \leq \widehat{R_{I_{sign}(\ell)}} \leq \frac{5}{2}\cdot D_{\ell}$ for all $\ell \in \lbrace 1, \ldots, \vert I_{sign} \vert \rbrace$ which with (\ref{eq:OPTlower1}) and (\ref{eq:OPTupper1}) gives
$\frac{1}{2\lambda\cdot (1+2\lambda\cdot T +25\lambda^2)}\cdot w(M)\leq \sum_{\ell=1}^{|I_{sign}|} r_{I_{sign}(\ell)} \cdot \widehat{R_{I_{sign}(\ell)}} \leq \frac{5}{2}\cdot w(M).$
Recall that we set $T =  8\lambda^2-2\lambda$. Now, folding in the factor of $\frac{1}{8}$ from the partitioning and rescaling the estimator gives an $O(\lambda^4)$-estimation on the weight of an optimal weighted matching.
\end{proof}

\setcounter{theorem}{2}
\subsection{Applications}
\label{sec:appl}
Since every edge insertion and deletion supplies the edge weight, it is straightforward to determine the rank for each edge upon every update. Using the following results for unweighted matching, we can obtain estimates with similar approximation guarantee and space bounds for weighted matching.
\subsubsection*{Random Order Streams}
For an arbitrary graph whose edges are streamed in random order, Kapralov, Khanna and Sudan~\cite{KKS14} gave an algorithm with $\mathrm{polylog~}n$ approximation guarantee using $\mathrm{polylog~}n$ space with failure probability $\delta = 1/\polylog n$. Since this probability takes the randomness of the input permutation into account, we cannot easily amplify it, though for $\log W \leq \delta$, the extension to weighted matching still succeeds with at least constant probability.
\subsubsection*{Adversarial Streams}
The arboricity of a graph $G$ is defined as $\max\limits_{U \subseteq V} \left\lceil \frac{\vert E(U) \vert}{\vert U \vert -1} \right\rceil$. Examples of graphs with constant arboricity include planar graphs and graphs with constant degree. 
For graphs of bounded arboricity $\nu$, Esfandiari et al.~\cite{EHL15} gave an algorithm with an $O(\nu)$ approximation guarantee using $\tilde{O}(\nu \cdot n^{2/3})$ space.
\subsubsection*{Dynamic Streams}

We give two estimation algorithms for the size of a maximum matching. First, we see that it is easy to estimate the matching size in trees. Second, we extend the result from \cite{EHL15} where the matching size of so called bounded arboricity graphs in insertion-only streams is estimated to dynamic graph streams.
\paragraph{Matching Size of Trees} Let $T = (V,E)$ be a tree with at least $3$ nodes and let $h_T$ be the number of internal nodes, i.e. nodes with degree greater than $1$. We know that the size of a maximum matching is between $h_T/2$ and $h_T$. Therefore, it suffices to estimate the number of internal nodes of a tree to approximate the maximum matching within $2+\varepsilon$ factor which was also observed in \cite{EHL15}. In order to estimate the matching size, we maintain an $\ell_0$-Estimator for the degree vector $d \in \mathbb{R}^N$ such that $d_v = deg(v)-1$ holds at the end of the stream and with it $\ell_0(d) = h_T$. In other words, we initialize the vector by adding $-1$ to each entry and update the two corresponding entries when we get an edge deletion or insertion. Since the number of edges in a tree is $N-1$, the preprocessing time can be amortized during the stream. Using Theorem 10 from Kane et al.~\cite{KNW10}, we can maintain the $\ell_0$-Estimator for $d$ in $O(\varepsilon^{-2}\log^2 N)$ space.

\begin{theorem}
Let $T = (V,E)$ be a tree with at least $3$ nodes and let $\varepsilon \in (0,1)$. Then there is an algorithm that estimates the size of a maximum matching in $T$ within a $(2+\varepsilon)$-factor in the dynamic streaming model using $1$-pass over the data and $O(\varepsilon^{-2}\log^2 N)$ space.
\end{theorem}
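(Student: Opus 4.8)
The plan is to flesh out the composition argument already sketched in the paragraph preceding the theorem: reduce matching-size estimation on a tree to estimating $\ell_0$ of a vector that a linear sketch can maintain, and then quote an off-the-shelf turnstile $\ell_0$-estimator. The only genuinely combinatorial ingredient is the sandwich $h_T/2 \le |M^*| \le h_T$ between the maximum matching size $|M^*|$ of $T$ and its number $h_T$ of internal vertices.

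\emph{Step 1 (combinatorial reduction).} Since $T$ is connected with at least $3$ vertices, no edge joins two leaves, so the internal vertices form a vertex cover; as $T$ is bipartite, K\"onig's theorem gives $|M^*| \le h_T$. For $|M^*| \ge h_T/2$ one can cite \cite{EHL15} or argue by induction on $h_T$: pick a vertex $v$ that is a leaf of the subtree induced by the internal vertices, match $v$ to one of its leaf-neighbours, delete $v$ together with all its leaf-neighbours, and note that this removes $v$ and at most one further vertex (the unique internal neighbour of $v$, in case its degree was $2$) from the internal set; the leftover is again a tree, so induction applies. Consequently any $(1\pm\varepsilon')$-approximation of $h_T$ gives a $(2+O(\varepsilon'))$-approximation of $|M^*|$.

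\emph{Step 2 ($h_T$ as a sketchable $\ell_0$-norm).} Let $d\in\mathbb{Z}^N$ be defined by $d_v = \deg_T(v)-1$. Every vertex of a tree on $\ge 3$ nodes has degree at least $1$, so $d_v \ge 0$, with $d_v \ne 0$ exactly when $v$ is internal; hence $\ell_0(d) = h_T$, and the entries of $d$ lie in $\{0,\dots,N-1\}$. Moreover $d$ is produced by a turnstile stream: start from $d = -\mathbf{1}$, i.e.\ feed the sketch $N$ unit decrements (one per coordinate), and on each insertion (deletion) of an edge $\{u,v\}$ add (subtract) $1$ to coordinates $u$ and $v$. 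Linearity of the sketch makes this initialisation legitimate, and the $N$ extra updates cost only amortised polylogarithmic time since the final tree has $N-1$ edges. Thus matching-size estimation reduces to a single turnstile $\ell_0$ query.

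\emph{Step 3 (plugging in the estimator and composing).} Run the linear $\ell_0$-sketch of Kane, Nelson and Woodruff (Theorem~10 of \cite{KNW10}) with accuracy $\varepsilon' = \varepsilon/8$; in $O(\varepsilon'^{-2}\log^2 N) = O(\varepsilon^{-2}\log^2 N)$ bits it returns $\widehat h$ with $(1-\varepsilon')h_T \le \widehat h \le (1+\varepsilon')h_T$ with high probability. Output $\widehat M := \widehat h / (2(1+\varepsilon'))$. By Step 1, $\widehat M \le h_T/2 \le |M^*|$ and $|M^*| \le h_T \le \widehat h/(1-\varepsilon') = (2(1+\varepsilon')/(1-\varepsilon'))\,\widehat M \le (2+\varepsilon)\,\widehat M$, so $\widehat M$ is a $(2+\varepsilon)$-estimator in the sense of the preliminaries; the algorithm uses one pass and $O(\varepsilon^{-2}\log^2 N)$ space. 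There is no deep obstacle here: the crux is recognising that ``number of internal vertices of a tree'' is precisely an $\ell_0$ query on the linearly maintainable shifted degree vector, after which turnstile sketching applies verbatim. The two points needing care are (i) that the $\varepsilon'$-relative error of the $\ell_0$-estimator composes with the multiplicative slack $2$ of Step~1 to yield exactly $2+\varepsilon$ (fixing the constant in $\varepsilon' = \Theta(\varepsilon)$), and (ii) that the instance parameters ($N$ coordinates, entries below $N$) make the Kane--Nelson--Woodruff space bound read as $O(\varepsilon^{-2}\log^2 N)$, with any probability amplification absorbed into the same bound.
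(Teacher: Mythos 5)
Your proposal is correct and follows essentially the same route as the paper: reduce to counting internal vertices via the sandwich $h_T/2 \le |M^*| \le h_T$, encode this count as $\ell_0$ of the shifted degree vector $d_v = \deg(v)-1$ maintained as a linear sketch, and invoke Theorem~10 of Kane et al.~\cite{KNW10} with $\varepsilon' = \Theta(\varepsilon)$. The only difference is that you supply a proof of the combinatorial bound, which the paper simply cites as an observation from \cite{EHL15}.
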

As in \cite{EHL15} this algorithm can be extended to forests with no isolated node.

\paragraph{Matching Size in Graphs with Bounded Arboricity} The algorithm is based on the results from \cite{EHL15}. Since we need parametrized versions of their results, we summarize and rephrase the ideas and proofs in this section. Let $G = (V,E)$ be a graph. The arboricity $a(G)$ of $G$ is a kind of density measure: The number of edges in every induced subgraph of size $s$  in $G$ is bounded by $s \cdot a(G)$. Formally, the arboricity $a(G)$ of $G$ is defined by $a(G) = \max\limits_{U \subseteq V} \left\lceil \frac{\vert E(U) \vert}{\vert U \vert -1} \right\rceil$.
If $\mu_G$ is an upper bound on the average degree of every induced subgraph of $G$ then $\mu_G \leq 2 \cdot a(G)$. 
\begin{definition}[\cite{EHL15}]
A node $v \in V$ is \emph{light} if $deg(v) \leq C$ with $C = \lceil \mu_G \rceil + 3$. Otherwise, $v$ is \emph{heavy}. An edge is \emph{shallow} if and only if both of its endpoints are light. We denote by $h_G$ the number of heavy nodes in $G$ and by $s_G$ the number of shallow edges in $G$, respectively.
\end{definition}
Using the results from Czygrinow, Hanchowiak, and Szymanska \cite{CHS09} (and $C = 20a(G)/\varepsilon^2$) it is possible to get a $O(a(G))$ approximation for the size of a maximum matching by just estimating $h_G$ and $s_G$. Esfandiari et al. \cite{EHL15} improved the approximation factor to roughly $5 \cdot a(g)$.
\begin{lemma}[\cite{EHL15}]
\label{lem:match_arboricity}
Let $G = (V,E)$ be a graph with maximum matching $M^*$. Then we have 
$ \frac{\max{\lbrace h_G, s_G \rbrace}}{\eta} \leq \vert M^* \vert \leq h_G + s_G$
where $\eta = 1.25 C + 0.75$ where $C$ is at most $\lceil 2 a(G) +3 \rceil$.
\end{lemma}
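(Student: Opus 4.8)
The plan is to handle the two inequalities separately and, for the lower bound, to split the task of bounding $\max\{h_G,s_G\}$ into the two subtasks of showing $h_G\le\eta\cdot|M^*|$ and $s_G\le\eta\cdot|M^*|$. The upper bound $|M^*|\le h_G+s_G$ is immediate: every edge of $M^*$ is either shallow (both endpoints light) or has at least one heavy endpoint; since $M^*$ is a matching its edges have pairwise disjoint endpoint sets, so at most $h_G$ of them carry a heavy endpoint and at most $s_G$ of them are shallow.

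For $h_G\le\eta\cdot|M^*|$ I would fix a maximum matching $M^*$ and split the heavy nodes into those saturated by $M^*$ and the set $H_U$ of heavy nodes left unsaturated. Saturated vertices number at most $2|M^*|$, so the first group contributes at most $2|M^*|$. For $H_U$ the key point is that the set of $M^*$-unsaturated vertices is independent (otherwise a missing edge would extend $M^*$), so $H_U$ is independent and every neighbour of a vertex of $H_U$ lies in the saturated set $S$, which has $|S|\le 2|M^*|$. Hence the induced subgraph on $H_U\cup S$ has more than $C\cdot|H_U|$ edges, all running between $H_U$ and $S$; comparing this with the density upper bound on the number of edges of an induced subgraph coming from the arboricity — equivalently from average degree at most $\mu_G\le 2a(G)$, using that $C\ge\lceil\mu_G\rceil+3>\mu_G$ makes the resulting linear inequality solvable with a positive coefficient — forces $|H_U|$ to be a fixed constant (below $2$) times $|M^*|$. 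Adding the two contributions gives $h_G<4|M^*|$, which is comfortably within $\eta\cdot|M^*|$ since $C\ge 3$ forces $\eta=1.25C+0.75\ge 4.5$.

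For $s_G\le\eta\cdot|M^*|$ I would pass to the shallow subgraph $G_s$, in which every vertex has degree at most $C$ by the definition of \emph{light}. Any matching of $G_s$ is a matching of $G$, so a maximum matching of $G_s$ has size at most $|M^*|$; and a graph of maximum degree $C$ decomposes into at most $C+1$ matchings (Vizing), each of size at most $|M^*|$, so $s_G\le (C+1)|M^*|$. Alternatively one charges each shallow edge directly onto an edge of a maximum matching of $G_s$ that it touches, each matching edge receiving $O(C)$ charges. Either route gives $s_G=O(C)\cdot|M^*|$; tightening the constant to the stated $1.25C+0.75$ uses in addition that $G_s$ inherits bounded average degree in every induced subgraph from $a(G)$, which limits how many shallow edges can lie strictly inside the saturated vertex set and lets one balance the charging between ``internal'' and ``crossing'' shallow edges.

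The only genuine subtlety I anticipate is the count of unsaturated heavy nodes: getting a usable bound there hinges on correctly combining independence of the unsaturated set with the arboricity bound on the induced subgraph $H_U\cup S$ and on not wasting the $2|M^*|$ bound on saturated vertices. The target constant $\eta=1.25C+0.75$ itself appears to have slack for this plan — it already follows from the Vizing bound $s_G\le (C+1)|M^*|$ together with the crude $h_G<4|M^*|$ estimate — so no delicate optimisation is needed; the remainder is elementary counting on a maximum matching together with the two equivalent forms of the density hypothesis, $|E(U)|\le a(G)\,|U|$ and average degree at most $\mu_G\le 2a(G)$.
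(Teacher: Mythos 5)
This lemma is not proved in the paper at all: it is imported verbatim from \cite{EHL15}, so there is no in-paper argument to compare against. Your proof is correct and self-contained, and it takes a different route from the original one in \cite{EHL15}, which (to the best of my recollection) obtains the constant $\eta=1.25C+0.75$ by constructing a fractional matching of value $\Omega(\max\{h_G,s_G\}/C)$ and invoking the $3/2$ integrality gap of the matching polytope. Your route is more elementary: the upper bound $|M^*|\le h_G+s_G$ is the standard disjointness count; for $h_G$ you split off the at most $2|M^*|$ saturated vertices and then bound the unsaturated heavy vertices $H_U$ via maximality (all their neighbours lie in the saturated set $S$, and $H_U$ is independent), which together with the average-degree bound on the induced subgraph $H_U\cup S$ gives $(2C+2-\mu_G)|H_U|\le \mu_G|S|\le 2\mu_G|M^*|$, hence $|H_U|<2|M^*|$ since $\mu_G<C$, so $h_G<4|M^*|$; for $s_G$ Vizing's theorem on the shallow subgraph (maximum degree at most $C$) gives $s_G\le (C+1)\nu(G_s)\le (C+1)|M^*|$. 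Both bounds fit under $\eta|M^*|$ because $C\ge 3$ forces $\eta\ge 4.5\ge 4$ and $\eta\ge C+1$, so the lemma as stated follows; what your approach loses is tightness of the constant, which here does not matter. One caution: the ``alternative'' charging you sketch for $s_G$ (each shallow edge charged to a touched matched vertex of degree at most $C$) only yields $s_G\le 2C|M^*|$, which exceeds $\eta|M^*|$ for every $C>1$, so the Vizing (or an equivalently sharp) bound is the step that must carry the proof — as written you do rely on it, so there is no gap.
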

Estimating $h_G$ and $s_G$ is possible by random sampling: For heavy nodes, we randomly draw a large enough set of nodes and count the heavy nodes by maintaining their degree. Rescaling the counter gives a sufficiently good estimate, provide $h_G$ is large enough. For $s_G$ we randomly draw nodes and maintain the induced subgraph. For each node contained in the subgraph it is straightforward to maintain the degree and thereby to decide whether or not a given edge from the subgraph is shallow. Then we can rescale the counted number of shallow edges which gives us an estimation on $s_G$ if $s_G$ is large enough.
Dealing with small values of $s_G$ and $h_G$, Esfandiari et al.  additionally maintain a small maximal matching of size at most $n^\alpha$ with $\alpha < 1$. If the maintained matching exceeds this value then we know that either $s_G$ or $h_G$ is greater than $n^\alpha/2$ by Lemma \ref{lem:match_arboricity} and the estimation of the parameters $h_G$ and $s_G$ will be sufficiently accurate. The main tool to extend this algorithm to dynamic graph streams is to estimate the size of a small matching by means of the Tutte matrix. But first, we restate the following three lemmas from \cite{EHL15} for arbitrary parameters and extend them to dynamic streams.




\begin{lemma}
\label{lem:heavy}
Let $T$ be an integer and $\varepsilon \leq 1/\sqrt{3}$. Then there exists a $1$-pass algorithm for dynamic streams that outputs a value $\widehat{h}$ which is a $(1\pm \varepsilon)$ estimation of $h_G$ if $h_G \geq T$ and which is smaller than $3T$ otherwise. The algorithm needs $O\left(\frac{\log^2 n}{\varepsilon^2} \cdot \frac{n}{T} \right)$ space and succeeds with high probability.
\end{lemma}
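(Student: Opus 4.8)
The plan is the standard uniform-sampling estimator for a counting quantity, adapted to dynamic streams by exploiting that, under the consistency assumption on the stream, the degree of any \emph{fixed} node can be recovered exactly from a single counter. Fix a sampling rate $p := \min\bigl\{1,\ \tfrac{c\log n}{\varepsilon^2 T}\bigr\}$ for a suitable absolute constant $c$. Before the stream starts, draw a random set $R\subseteq V$ by including each node independently with probability $p$, and for every $v\in R$ keep a counter $\widehat d_v$ initialized to $0$; on an update $\pm(u,v)$ we add (respectively subtract) $1$ to $\widehat d_u$ if $u\in R$ and likewise for $v$. At the end $\widehat d_v=\deg(v)$ for every $v\in R$, so $X:=\lvert\{v\in R:\widehat d_v>C\}\rvert$ is exactly the number of heavy nodes landing in $R$, and we output $\widehat h:=X/p$. (In the degenerate case $p=1$, i.e.\ $T=O(\varepsilon^{-2}\log n)$, we instead keep all $n$ degree counters; this already fits in $O(n\log n)\subseteq O(\varepsilon^{-2}T^{-1}n\log^2 n)$ space and returns $h_G$ exactly, so assume $p<1$ from now on.)

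For the analysis, $X$ is a sum of $h_G$ independent Bernoulli$(p)$ random variables with $\Ex{X}=p\,h_G$, and $X\le h_G$ always. If $h_G\ge T$ then $\Ex{X}\ge pT=\Theta(\varepsilon^{-2}\log n)$, so a multiplicative Chernoff bound gives $\Pr\bigl[\lvert X-p h_G\rvert>\varepsilon\, p h_G\bigr]\le 2\exp(-\varepsilon^2 pT/3)\le n^{-2}$ once $c$ is large enough, and dividing by $p$ shows $\widehat h\in[(1-\varepsilon)h_G,(1+\varepsilon)h_G]$. If instead $h_G<T$, split into two sub-cases: if $h_G<3pT$ then $\widehat h=X/p\le h_G/p<3T$ holds deterministically; and if $3pT\le h_G<T$ then $X$ is stochastically dominated by $\mathrm{Bin}(T,p)$ of mean $pT$, so the upper Chernoff tail gives $\Pr[X\ge 3pT]\le (e^2/27)^{pT}\le n^{-2}$, hence $\widehat h<3T$ with high probability. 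A union bound over the single relevant event yields success probability $\ge 1-1/n$. The hypothesis $\varepsilon\le 1/\sqrt3$ is only a technicality ensuring the constants in these tail estimates line up with the literal constants $1\pm\varepsilon$ and $3$ in the statement.

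The space is $O(\lvert R\rvert\log n)$, since each element of $R$ costs $O(\log n)$ bits for its identifier and $O(\log n)$ bits for its degree counter, while $\Ex{\lvert R\rvert}=pn=\Theta(\varepsilon^{-2}T^{-1}n\log n)$ and a Chernoff bound — or sampling exactly $\lceil pn\rceil$ nodes without replacement — makes $\lvert R\rvert=O(\varepsilon^{-2}T^{-1}n\log n)$ hold with high probability (or deterministically); this gives the claimed $O\!\bigl(\tfrac{\log^2 n}{\varepsilon^2}\cdot\tfrac nT\bigr)$. I do not expect a real obstacle here: the only two points needing care are (i) that the subtractions never corrupt $\widehat d_v$, which is precisely the consistency assumption for dynamic streams, and (ii) organizing the analysis so that a single sampling rate simultaneously delivers a $(1\pm\varepsilon)$ estimate when $h_G\ge T$ and an $O(T)$ upper bound when $h_G<T$ — splitting the latter into the sub-cases $h_G<3pT$ and $h_G\ge 3pT$ as above.
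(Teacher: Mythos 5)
Your proposal is correct and follows essentially the same route as the paper: uniformly sample $\Theta\bigl(\tfrac{\log n}{\varepsilon^2}\cdot\tfrac{n}{T}\bigr)$ nodes, maintain their exact degrees through insertions and deletions, count the heavy ones, rescale, and apply Chernoff bounds separately for $h_G\ge T$ and $h_G<T$. The only differences are cosmetic (independent Bernoulli sampling instead of a fixed-size sample, and splitting the $h_G<T$ case into a deterministic sub-case and a tail bound, which in fact handles the $h_G$ very small or zero regime a bit more carefully than the paper's single Chernoff computation).
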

\begin{proof}
The probablity of sampling a heavy node is $\frac{h_G}{n}$. Hence, sampling a set of nodes $S$ gives us $|S|\cdot \frac{h_G}{n}$ heavy nodes on expectation. Set $|S|=\frac{3\log n}{\varepsilon^2} \frac{n}{T}$. For each node $v\in S$ we maintain its degree using $O(\log N)$ space. We define the indicator variable $X_v$ with $v \in S$ which is $1$ if $v$ is heavy and $0$ otherwise. Then our estimator for $h_G$ is $\hat{h} = \frac{n}{\vert S \vert} \sum X_v$ which is equal to $h_G$ in expectation. First, assume $h_G \geq T$. Then using the Chernoff bound 
we have
\begin{eqnarray*}
\Prob{ \hat{h} \geq (1+\varepsilon) \cdot \Ex{\hat{h}}} & = & \Prob{ \sum_{v\in S} X_v \geq (1+\varepsilon) \cdot \Ex{ \sum_{v\in S} X_v}}\\
&\leq & \exp\left( - \frac{3\log n}{\varepsilon^2} \frac{n}{T} \cdot \frac{h_G}{n}\cdot \frac{\varepsilon^2}{3} \right) \leq \frac{1}{n}.
\end{eqnarray*}
The same bound also holds for $\Prob{\hat{h} \leq (1-\varepsilon) \cdot \Ex{\hat{h}}}$. If $h_G < T$, then again using the Chernoff bound gives us
\begin{eqnarray*}
& &\Prob{\frac{n}{|S|}\cdot\left(\sum_{v\in S} X_v\right) \geq 3T} \\
&=& \Prob{\sum_{v\in S} X_v \geq \frac{3 T\cdot |S|\cdot h_G}{n\cdot h_G}}\\
&=& \Prob{\sum_{v\in S} X_v > \left(1 + \frac{3T}{ h_G}-1\right)\cdot \Ex{\sum_{v\in S} X_v}}\\
&\leq & \exp\left(-\frac{3\log n}{\varepsilon^2} \frac{n}{T} \cdot \frac{h_G}{n}\cdot \frac{\frac{3T}{h_G}-1}{2}\right) \\
& \leq & \exp\left(-\frac{3\log n}{\varepsilon^2} \frac{n}{T} \cdot \frac{h_G}{n}\cdot \frac{\frac{2T}{h_G}}{2}\right) \leq \frac{1}{n},
\end{eqnarray*}
where the last inequality follows from $\varepsilon \leq \frac{1}{\sqrt{3}}$.
\end{proof}

\begin{lemma}
\label{lem:shallow2}
Let $T$ be an integer and $\varepsilon \leq 1/\sqrt{3}$. Then there exists a $2$-pass algorithm for dynamic streams that outputs a value $\widehat{s}$ which is a $(1\pm \varepsilon)$ estimation of $s_G$ if $s_G \geq T$ and which is smaller than $3T$ if $s_G < T$.  The algorithm uses $O\left(\frac{a(G) \cdot n \log^4 n}{\varepsilon^2 T} \right)$ space and succeeds with high probability.
\end{lemma}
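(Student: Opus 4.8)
The plan is to mimic the sampling scheme behind Lemma~\ref{lem:heavy}, but the extra difficulty is that ``shallow'' is a property of an \emph{edge} that depends on the final degrees of \emph{both} its endpoints, and these are only determined at the end of the stream, so shallowness cannot be tested online the way the insertion-only algorithm of \cite{EHL15} filters edges as they arrive. I would sample each node independently into a set $S$ with probability $p$ (of order $\frac{a(G)\,\mathrm{poly}(\log n)}{\varepsilon^2 T}$, so that $|S| = \tilde O(a(G)n/(\varepsilon^2 T))$ with high probability) and use the estimator $\widehat{s} = \frac{1}{2p}\sum_{v\in S} d^{sh}_v$, where $d^{sh}_v$ denotes the number of shallow edges incident to $v$. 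Since $\mathbb{E}\big[\sum_{v\in S} d^{sh}_v\big] = 2p\, s_G$ this is unbiased, and working with edge incidences rather than with the induced subgraph $G[S]$ keeps the expectation linear in $p$ (not $p^2$), which is exactly what gives the $1/T$ dependence in the space bound.

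The two passes are spent computing the $d^{sh}_v$ within the space budget. In pass~1 I would maintain the exact degree $\deg_G(v)$ for every $v\in S$ (an $O(\log n)$-bit counter each) together with a $C$-sparse recovery sketch of the incidence vector of each $v\in S$ (linear, hence dynamic-stream friendly, $O(C\log^2 n)$ bits each, with $C=\lceil\mu_G\rceil+3=O(a(G))$). At the end of pass~1 the degrees reveal which sampled nodes are light; for a light $v$ the sparse recovery succeeds and returns its incident edges and hence its neighbourhood $N(v)$, while for a heavy $v$ the recovery reports more than $C$ nonzeros and is discarded — a heavy node is incident to no shallow edge anyway. Setting $W=\bigcup\{N(v): v\in S,\ v \text{ light}\}$, the arboricity bound gives $|W|=O(C|S|)$. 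In pass~2 I would then simply maintain $\deg_G(u)$ for all $u\in W$, which tells us for each recovered edge $(v,u)$ with $v$ light whether $u$ is light, i.e.\ whether $(v,u)$ is shallow; summing yields the $d^{sh}_v$ and hence $\widehat s$. The space is dominated by the pass-1 sparse-recovery sketches, $O(|S|\cdot C\log^2 n)$, which with the chosen $p$ one bounds by $O\!\left(\frac{a(G)n\log^4 n}{\varepsilon^2 T}\right)$ (a slightly wasteful direct analysis would carry an extra factor of $a(G)$, removed by the median trick below).

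Correctness is argued as in Lemma~\ref{lem:heavy}. Each term $d^{sh}_v$ lies in $\{0\}\cup[1,C]$, since $d^{sh}_v>0$ forces $v$ light, and the terms $\mathbbm{1}[v\in S]\,d^{sh}_v$ are independent over $v$; after rescaling by $1/C$ we have a sum of independent $[0,1]$ variables with mean $\tfrac{2p\,s_G}{C}$, which is $\Omega(\log n/\varepsilon^2)$ whenever $s_G\ge T$, so a Chernoff bound gives $\widehat s\in(1\pm\varepsilon)s_G$ with high probability. When $s_G<T$, the one-sided Chernoff estimate used for the ``$<3T$'' case in Lemma~\ref{lem:heavy} applies verbatim and yields $\widehat s<3T$. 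If the direct Chernoff bound is too lossy one instead runs $O(\log n)$ independent copies with a smaller $p$, bounds the variance of one copy by $O(C\cdot\mathbb{E}[\widehat s])$ via $d^{sh}_v\le C$, and takes the median, trading a factor $a(G)$ in the sampling rate for $\log n$ copies.

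The main obstacle is precisely the dynamic-stream constraint on testing shallowness: one must commit sketch budget before seeing the final degrees. The resolution, and the only delicate part, is the interplay across the two passes — pass~1 used only to learn which sampled nodes are light and to recover the $O(a(G))$-size neighbourhoods of those light nodes via sparse recovery, and pass~2 used to learn exactly the additional degrees needed — with the arboricity bound keeping both the recovered neighbourhoods and the tracked degree set of size $\tilde O(a(G)n/(\varepsilon^2 T))$. Everything else (unbiasedness, the Chernoff/median concentration, and the two-sided guarantee) is routine and parallels Lemma~\ref{lem:heavy}.
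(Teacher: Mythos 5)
Your scheme (independent node sampling, per-node $C$-sparse recovery of neighbourhoods in pass~1, degree checks on the recovered neighbours in pass~2, estimator $\widehat s=\frac{1}{2p}\sum_{v\in S}d^{sh}_v$) is a genuinely different route from the paper's and its correctness analysis is essentially sound, but it does not deliver the space bound stated in the lemma, and the step you invoke to repair this is the gap. Your per-node cost is $\Theta(C\log^2 n)$ bits with $C=\Theta(a(G))$, and your sampling rate must itself carry a factor $C$: for the Chernoff route you need $\frac{2p\,s_G}{C}=\Omega(\log n/\varepsilon^2)$, i.e.\ $p=\Omega\bigl(\frac{C\log n}{\varepsilon^2 T}\bigr)$; and for the Chebyshev/median route the bound $\Var{X}\le C\,\Ex{X}$ (with $X=\sum_{v\in S}d^{sh}_v$) gives $\Prob{|X-\Ex{X}|>\varepsilon \Ex{X}}\le \frac{C}{2\varepsilon^2 p\,s_G}$, which forces $p=\Omega\bigl(\frac{C}{\varepsilon^2 T}\bigr)$ per copy just to get constant success probability. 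So the median trick trades away only the $\log n$ in the sampling rate against $O(\log n)$ copies; it cannot remove the $a(G)$ factor, and in every variant your total space is $\tilde O\bigl(\frac{a(G)^2\,n}{\varepsilon^2 T}\bigr)$ rather than the claimed $O\bigl(\frac{a(G)\,n\log^4 n}{\varepsilon^2 T}\bigr)$. This is fine for the headline applications (constant or polylogarithmic arboricity) but does not prove the lemma as stated, which is parameterized by $a(G)$ and is later used with an arbitrary bound $\alpha$ on the arboricity.

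The paper avoids the quadratic dependence by sampling \emph{edges} uniformly rather than nodes: in pass~1 it draws $\frac{3\log n}{\varepsilon^2}\cdot\frac{a(G)n}{T}$ edges via $\ell_0$-samplers (each $O(\log^3 n)$ bits, independent of $C$), and in pass~2 it maintains the exact degrees of the at most two endpoints of each sampled edge to test shallowness; the probability that a sample is shallow is $s_G/|E|\ge s_G/(a(G)n)$, and the same Chernoff computation as in Lemma~\ref{lem:heavy} gives the $(1\pm\varepsilon)$ and ``$<3T$'' guarantees. There the single factor of $a(G)$ enters only through $|E|\le a(G)n$ in the sampling rate, while the per-sample cost stays polylogarithmic, which is exactly what yields $O\bigl(\frac{a(G)\,n\log^4 n}{\varepsilon^2 T}\bigr)$. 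If you want to salvage your node-sampling design, you would need to decouple the per-node sketch size from $C$ (e.g.\ replace the full neighbourhood recovery by sampling incident edges per node), at which point you are essentially back to the paper's edge-sampling argument.
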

\begin{proof}
In the first pass, we sample $\frac{3\log n}{\varepsilon^2} \frac{a(G) \cdot n}{T}$ edges uniformly at random using $\ell_0$ samplers, each of which cost at most $O(\log^3 n)$ space~\cite{JST11}. For each node of a sampled edge, we maintain its degree in the second pass to decide whether a given edge is shallow or not. Hereafter, we reapply the analysis of Lemma~\ref{lem:heavy}: Let $S = (e_1, \ldots, e_{\vert S \vert})$ be the sequence of sampled edges in the first pass and let $X_i$ be the indicator variable which is $1$ if and only if $e_i$ is shallow. The probability of sampling a shallow edge is $\frac{s_G}{\vert E \vert}$ which implies that $\Ex{\sum X_i} = \vert S \vert \cdot  \frac{s_G}{\vert E \vert} \geq \vert S \vert \cdot  \frac{s_G}{a(G) \cdot N}$. Now, let $\widehat{s} = \frac{\vert E \vert}{\vert S \vert} \sum X_i$ be our estimator. We know that $\Ex{\widehat{s}} = s_G$. If $s_G \geq T$ then by Chernoff we have
\begin{eqnarray*}
\Prob{ \hat{s} \geq (1+\varepsilon) \cdot \Ex{\hat{s}}} & = & \Prob{ \sum X_i \geq (1+\varepsilon) \cdot \Ex{ \sum X_i}}\\
&\leq & \exp\left( - \frac{3\log n}{\varepsilon^2} \frac{a(G) \cdot n}{T} \cdot \frac{s_G}{a(G) \cdot n} \cdot \frac{\varepsilon^2}{3} \right) \leq \frac{1}{n}.
\end{eqnarray*}
The same bound also holds for $\Prob{\hat{s} \leq (1-\varepsilon) \cdot \Ex{\hat{s}}}$. If $s_G < T$, then again using the Chernoff bound gives us
\begin{eqnarray*}
& &\Prob{\frac{\vert E \vert}{|S|}\cdot\left(\sum X_i \right) \geq 3T} \\
&=& \Prob{\sum X_i \geq \frac{3 T\cdot |S|\cdot s_G}{\vert E \vert \cdot s_G}}\\
&=& \Prob{\sum X_i > \left(1 + \frac{3T}{ s_G}-1\right)\cdot \Ex{\sum X_i}}\\
&\leq & \exp\left(-\frac{3\log n}{\varepsilon^2} \frac{a(G) \cdot n}{T} \cdot \frac{s_G}{a(G) \cdot n}\cdot \frac{\frac{3T}{s_G}-1}{2}\right) \\
& \leq & \exp\left(-\frac{3\log n}{\varepsilon^2} \frac{a(G) \cdot n}{T} \cdot \frac{s_G}{a(G) \cdot n}\cdot \frac{\frac{2T}{s_G}}{2}\right) \leq \frac{1}{n},
\end{eqnarray*}
where the last inequality follows from $\varepsilon \leq \frac{1}{\sqrt{3}}$.
\end{proof}

\begin{lemma}
\label{lem:shallow1}
Let $\varepsilon > 0$ and $T>(16C/\varepsilon)^2$ be an integer. Then there exists a $1$-pass algorithm for dynamic streams that outputs a value $\widehat{s}$ which is a $(1\pm \varepsilon)$ estimation of $s_G$ if $s_G \geq T$ and which is smaller than $3T$ if $s_G < T$.  The algorithm uses $\tilde{O}\left(\frac{a(G) \cdot n}{\varepsilon \sqrt{T}} \right)$ space and succeeds with constant probability.
\end{lemma}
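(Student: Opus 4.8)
The plan is to mimic the proof of Lemma~\ref{lem:shallow2}, but since a single pass makes it impossible to learn the degrees of the endpoints of a sampled edge after the fact, I would sample \emph{vertices} rather than edges and maintain the induced subgraph together with the degrees of the sampled vertices. Concretely: fix a set $R\subseteq V$ by including each vertex independently with probability $p = K/(\varepsilon\sqrt{T})$ for a sufficiently large absolute constant $K$ (note $p<1$ since $T>(16C/\varepsilon)^2$); to keep $R$ fixed over a dynamic stream either store $R$ explicitly (within the space budget) or use a $4$-wise independent hash function, which is all the analysis needs. During the stream maintain (i) for every $v\in R$ the exact degree $\deg_G(v)$ via a counter updated on each insertion/deletion, and (ii) a linear sparse-recovery sketch (e.g.\ \cite{JST11}) of dimension $\tilde{O}(a(G)\,|R|)$ into which we feed only those edge updates whose \emph{both} endpoints lie in $R$. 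Since every induced subgraph of $G$ has at most $a(G)\,|R|$ edges, the vector being sketched is $a(G)|R|$-sparse at the end of the stream and is recovered exactly with high probability, irrespective of the intermediate number of edges. Finally set $X$ to be the number of recovered edges $\{u,v\}$ of $G[R]$ with $\deg_G(u)\le C$ and $\deg_G(v)\le C$ (checkable because $u,v\in R$) and output $\widehat{s}=X/p^2$.

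For correctness, a shallow edge contributes to $X$ exactly when both its endpoints are sampled, so $\Ex{X}=p^2 s_G$ and $\Ex{\widehat{s}}=s_G$. The contributing events are dependent, so I would use Chebyshev's inequality rather than a Chernoff bound. Two shallow edges with no common vertex are independent ($4$-wise independence of vertex inclusion), so only the diagonal terms and pairs of shallow edges sharing a vertex contribute to the variance; a shallow edge shares a vertex with at most $2(C-1)\le 2C$ other edges because both its endpoints are light, and such a pair involves three distinct vertices, contributing covariance at most $p^3$. Hence
\[
\Var{X}\ \le\ s_G\,p^2\ +\ 2C\,s_G\,p^3.
\]

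With this bound the two cases are routine. If $s_G\ge T$, then $\Var{X}/\Ex{X}^2 \le \frac{1}{p^2 s_G}+\frac{2C}{p\,s_G}\le \frac{1}{p^2 T}+\frac{2C}{pT}$; substituting $p=K/(\varepsilon\sqrt{T})$ and using $\sqrt{T}>16C/\varepsilon$ bounds the first term by $\varepsilon^2/K^2$ and the second by $\varepsilon^2/(8K)$, so Chebyshev gives $\Prob{\,|\widehat{s}-s_G|>\varepsilon s_G\,}\le \frac{1}{K^2}+\frac{1}{8K}$, a small constant for $K$ large. If $s_G<T$, then $\Ex{X}<p^2 T$, so $\widehat{s}\ge 3T$ forces $X-\Ex{X}\ge 2p^2 T$; Chebyshev bounds this by $\Var{X}/(2p^2T)^2\le \frac{1}{4p^2 T}+\frac{C}{2pT}$, again a small constant by the same estimates. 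The space is dominated by the sparse-recovery sketch, $\tilde{O}(a(G)\,|R|)=\tilde{O}\!\left(\frac{a(G)\,n}{\varepsilon\sqrt{T}}\right)$ (the degree counters and storing $R$ cost $\tilde{O}(|R|)$, which is subsumed), and taking a union bound with the high-probability correctness of the sparse-recovery sketch leaves overall constant success probability.

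The main obstacle is exactly the loss of independence: vertex sampling makes the "both endpoints survive" indicators correlated through shared endpoints, which forces a second-moment analysis, and the contribution of overlapping shallow-edge pairs to the variance is what dictates the threshold $T>(16C/\varepsilon)^2$ — one needs $p\,s_G$ (equivalently $\sqrt{T}/\varepsilon$) to dominate $C/\varepsilon^2$ so that this term is negligible, which is why the single-pass bound degrades to $\sqrt{T}$ (and only a constant, not high, success probability) compared with the $T$ achieved by the two-pass Lemma~\ref{lem:shallow2}. A secondary point to get right is that the induced subgraph must be recovered exactly even though the intermediate stream may contain more edges; this is handled by using a \emph{linear} sparse-recovery sketch whose dimension is set by the static arboricity bound $|E(G[R])|\le a(G)|R|$.
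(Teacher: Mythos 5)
Your proposal is correct and follows essentially the same route as the paper's proof: sample a set of vertices of size $\tilde{O}(n/(\varepsilon\sqrt{T}))$, track their exact degrees, recover the induced subgraph via a linear $a(G)|S|$-sparse recovery sketch (to cope with intermediate edge counts under deletions), count shallow edges, rescale, and do a second-moment/Chebyshev analysis in which only pairs of shallow edges sharing a light endpoint (at most $2C$ per edge) contribute to the covariance, with $T>(16C/\varepsilon)^2$ making that term negligible. The only difference is cosmetic: you sample vertices independently (or $4$-wise independently), whereas the paper draws a fixed-size sample without replacement and uses the resulting nonpositive covariance for disjoint edge pairs; both yield the same bounds.
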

\begin{proof}
Let $S$ be a set of $\frac{4n}{\varepsilon\sqrt{T}}$ randomly chosen nodes. We maintain the entire subgraph induced by $S$ and the degree of each node in $S$. Note that the number of edges in this subgraph at the end of the stream is at most $a(G) \cdot |S|$. Since we have edge deletions this number may be exceeded at some point during the stream. Thus, we cannot explicitly store the subgraph but we can recover all entries using an $a(G) \cdot |S|$-sparse recovery sketch using $\tilde{O}(a(G) \cdot |S|)$ space (see Barkay et al \cite{BaPS13}). Let $e_1, \ldots, e_{s_G}$ be the shallow edges in $G$. Define $X_i=1$ if $e_i \in E(S)$ and $0$ otherwise. $X_i$ is Bernouilli distributed where the probability of both nodes being included in the subgraph follows from the hypergeometric distribution with population $n$, $2$ successes in the population, sample size $|S|$ and $2$ successes in the sample:
\[p= \frac{\binom{2}{2}\binom{N-2}{|S|-2}}{\binom{n}{|S|}} = \frac{|S|\cdot (|S|-1)}{n\cdot (n-1)} \geq \dfrac{\vert S \vert^2}{2 n^2} = \dfrac{8}{\varepsilon^2 T}.\]
Hence $X_i$ is Bernoulli distributed, we have $Var\left[X_i\right]= p\cdot (1-p)\leq p$. We know that $Var\left[\sum X_i\right] = \sum Var\left[X_i\right] + \sum_{i \neq j} \Cov{X_i, X_j}$. For the covariance between two variables $X_i$ and $X_j$ we have two cases: If $e_i$ and $e_j$ do not share a node, then $X_i$ and $X_j$ cannot be positively correlated, i.e. $\Cov{X_i, X_j} > 0$. To be more precise, we observe that by definition $\Cov{X_i,X_j}$ is equal to $\Ex{X_i X_j} - \Ex{X_i} \cdot \Ex{X_j}$ which is equal to $\Prob{X_i = X_j = 1} - p^2$. The probability $\Prob{X_i = X_j = 1}$ is equal to the probability of drawing exactly $4$ fixed nodes from $V$ with a sample of size $\vert S \vert$ which is
$$ \frac{\binom{4}{4}\binom{n-4}{|S|-4}}{\binom{n}{|S|}} = \frac{|S|\cdot (|S|-1)\cdot (|S|-2)\cdot (|S|-3)}{n\cdot (n-1)\cdot (n-2)\cdot (n-3)}.$$
Since $\frac{a+c}{b+c} \geq \frac{a}{b}$ for $a \leq b$ and $c \geq 0$, this probability is at most $p^2$ which means that the covariance is at most $0$. If $e_i$ and $e_j$ share a node, we have
\begin{eqnarray*}
\Cov{X_i,X_j} &\leq&  \Prob{X_i=X_j=1}\\
&=&\frac{\binom{3}{3}\binom{n-3}{|S|-3}}{\binom{n}{|S|}} = \frac{|S|\cdot (|S|-1)\cdot (|S|-2)}{n\cdot (n-1)\cdot (n-2)} \leq p^{3/2}.
\end{eqnarray*}
By definition each node incident to a shallow edge has at most $C$ neighbors and therefore, we have at most $2C$ edges that share a node with a given shallow edge. In total, we can bound the variance of $X$
\begin{eqnarray*}
\Var{X} &=& \sum \Var{X_i} + \sum_{i \neq j} \Cov{X_i,X_j} \\
&\leq & p \cdot s_G + \sum_{\substack{
   e_i \neq e_j,\\
   e_i,e_j \text{ share a node}
  }} \Cov{X_i,X_j} \leq p \cdot s_G  + 2C \cdot s_G \cdot p^{3/2} \leq 2p \cdot s_G
\end{eqnarray*} 
where the last inequality follows from $\sqrt{p} \leq \frac{\vert S \vert}{N/2} = \frac{8}{\varepsilon \sqrt{T}}$ and $T \geq (16C/\varepsilon)^2$. Using Chebyshev's inequality we have for $s_G\geq T$
\begin{eqnarray*}
\Prob{\left\vert\frac{1}{p} \cdot X - \frac{1}{p}\Ex{X}\right\vert > \epsilon\cdot \frac{1}{p}\Ex{X}}& = & \Prob{ \left\vert X - \Ex{X}\right\vert > \epsilon\cdot \Ex{X}}\\
&\leq & \frac{\Var{X}}{\epsilon^2\Ex{X}^2} \leq \frac{2p \cdot s_G}{\epsilon^2 p^2 \cdot s_G^2} \leq \frac{2}{\epsilon^2 T p}\\
&\leq & \frac{2\varepsilon^2 T}{8 \epsilon^2 T} = \frac{1}{4}.
\end{eqnarray*}

If $s_G < T$, we have $\Ex{X} = p \cdot s_G < pT$. Thus, it is 
\begin{eqnarray*}
\Prob{\frac{1}{p} \cdot X \geq 3T} &=& \Prob{X - \Ex{X} \geq 3 T p - \Ex{X}}\\
& \leq & \Prob{\vert X - \Ex{X} \vert \geq 2 T p} \\
&\leq & \frac{ \Var{X}}{4 T^2 p^2} \leq \frac{2 p \cdot s_G}{4 T^2 p^2} \leq \frac{2}{4 T p} \leq \frac{2\varepsilon^2 T}{16T} = \dfrac{\varepsilon^2}{16} \leq \frac{1}{16}.
\end{eqnarray*}
\end{proof}

\begin{algorithm}[h]
\caption{\bf{Unweighted Matching Approximation}}
\label{alg:unweighted_matching1}
\algorithmicrequire{ $G=(V,E)$ with $a(G) \leq \alpha$ and $\varepsilon \in (0, 1/\sqrt{3})$}\\
\algorithmicensure{ Estimator on the size of a maximum matching}
\begin{algorithmic}
\State{Set $T = n^{2/5}$ for a single pass and $T=n^{1/3}$ for two passes and $\eta = 2.5\lceil 2 \cdot \alpha+3 \rceil+5.75$.}
\State{Let $\hat{h}$ and $\hat{s}$ be the estimators from Lemma \ref{lem:heavy} and Lemma \ref{lem:shallow1}}
\For{$i=0, \ldots, \log 3T/(1-\varepsilon)$}
\State{Solve rank decision with parameter $k = 2^i$ on the Tutte-Matrix $T(G)$ with randomly chosen indeterminates}
\EndFor
\If{$\text{rank}(T(G)) < 3T/(1-\varepsilon)$}
\State{Output the maximal $2^{i+1}$ for the maximal $i \in \lbrace 0,\ldots , 2^{\log 3T/(1-\varepsilon)} \rbrace$ with $\text{rank}(T(G)) \geq 2^i$}
\Else
\State{Output $\dfrac{\max\lbrace \hat{h}, \hat{s} \rbrace}{(1+\varepsilon) \eta}$.}
\EndIf
\end{algorithmic}
\end{algorithm}

Algorithm \ref{alg:unweighted_matching1} shows the idea of the estimation of the unweighted maximum matching size in bounded arboricity graphs using the previous results and the relation between the rank of the Tutte matrix and the matching size. 
\begin{theorem}
\label{thm:unweighted}
Let $G$ be a graph with $a(G) \leq \alpha$ with $n \geq (16\alpha/\varepsilon)^{5}$. Let $\varepsilon \in (0,1/\sqrt{3})$. Then there exists an algorithm estimating the size of the maximum matching in $G$ within a $\frac{2(1+\varepsilon)(5\cdot a(G)+O(1))}{(1-\varepsilon)}$-factor in the dynamic streaming model using 
\begin{itemize}
\item a single pass over the data and $\tilde{O}(\frac{\alpha\cdot n^{4/5}}{\varepsilon^2})$ space or
\item $2$ passes over the data and $\tilde{O}(\alpha \cdot n^{2/3})$ space.
\end{itemize}
\end{theorem}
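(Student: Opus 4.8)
The plan is to dispatch on the size of a maximum matching $M^{*}$: a \emph{small-matching} regime, in which $|M^{*}|=O(T)$ is recovered up to a constant factor from the rank of the Tutte matrix, and a \emph{large-matching} regime, in which $|M^{*}|=\Omega(T)$ is estimated through the samplers $\widehat{h},\widehat{s}$ of Lemmas~\ref{lem:heavy} and~\ref{lem:shallow1}; the threshold $3T/(1-\varepsilon)$ in Algorithm~\ref{alg:unweighted_matching1} is what makes the two regimes overlap, and the parameter $T$ is chosen to balance the space of the two halves.

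First I would set up the Tutte-matrix part. Let $T(G)$ be the Tutte matrix with a distinct indeterminate per edge; by the Tutte--Lov\'asz theorem $\operatorname{rank}(T(G))=2\,|M^{*}|$ over the field of rational functions, and substituting for each indeterminate an independent uniform element of a prime field $\mathbb{F}_{q}$ with $q=\operatorname{poly}(n)$ preserves the rank with probability $1-\operatorname{poly}(n)/q$ by Schwartz--Zippel (it is enough that one nonsingular $2|M^{*}|\times 2|M^{*}|$ minor survives); to make this consistent under insertions \emph{and} deletions I would fix the value of an edge $\{u,v\}$ as a random hash of $\{u,v\}$. For the ``rank decision with parameter $k$'' I would maintain the linear sketch $S\,T(G)\,S'$ with random $S\in\mathbb{F}_{q}^{k\times n}$ and $S'\in\mathbb{F}_{q}^{n\times k}$, using that $\operatorname{rank}(S A S')=\min\{k,\operatorname{rank}(A)\}$ with high probability and that this sketch is additive in the entries of $T(G)$; it occupies $O(k^{2}\log n)$ bits and is thus maintainable in a dynamic stream (this is essentially the small-matching primitive of Chitnis et al.~\cite{CCEHMMV15}). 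Running it for $k=1,2,4,\dots,2^{\lceil\log(3T/(1-\varepsilon))\rceil}$ costs $\tilde{O}(T^{2})$ in total and, conditioned on all $O(\log n)$ calls being correct, either certifies $\operatorname{rank}(T(G))<3T/(1-\varepsilon)$ and pins $\operatorname{rank}(T(G))$ down to within a factor $2$, or certifies $\operatorname{rank}(T(G))\ge 3T/(1-\varepsilon)$.

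In the first case the algorithm returns $2^{i+1}$ for the largest $i$ with $\operatorname{rank}(T(G))\ge 2^{i}$, which is within a factor $O(1)$ of $2|M^{*}|$, hence of $|M^{*}|$ --- safely inside the claimed bound because $\eta=\Theta(a(G))\ge 1$. In the second case $|M^{*}|=\Omega(T)$, and I would invoke Lemma~\ref{lem:match_arboricity}: from $|M^{*}|\le h_{G}+s_{G}$ the threshold $3T/(1-\varepsilon)$ forces $\max\{h_{G},s_{G}\}\ge T$, so whichever of $h_{G},s_{G}$ is the larger is reported within $(1\pm\varepsilon)$ by its sampler, while the other is either reported within $(1\pm\varepsilon)$ or certified $<3T$; one then checks that $\max\{\widehat{h},\widehat{s}\}$ lies between $(1-\varepsilon)\max\{h_{G},s_{G}\}$ and $\max\{(1+\varepsilon)\max\{h_{G},s_{G}\},\,3T\}$, and combining this with $\tfrac{1}{2}|M^{*}|\le\max\{h_{G},s_{G}\}\le\eta|M^{*}|$ and dividing by $(1+\varepsilon)\eta$ --- as the algorithm does --- yields a $\tfrac{2(1+\varepsilon)\eta}{1-\varepsilon}$-estimate of $|M^{*}|$; plugging in $\eta=5\,a(G)+O(1)$ gives the stated factor. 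For the $2$-pass variant the only change is to replace Lemma~\ref{lem:shallow1} by the $2$-pass estimator of Lemma~\ref{lem:shallow2}; the heavy sampler and the rank sketches are already $1$-pass, so pass one builds the $\ell_{0}$-samplers and the sketches and pass two resolves the degrees of the $\tilde{O}(a(G)\,n/T)$ sampled edge-endpoints.

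It remains to assemble the space and the failure probability. The heavy sampler costs $\tilde{O}(\varepsilon^{-2}n/T)$ (Lemma~\ref{lem:heavy}), the shallow sampler $\tilde{O}(a(G)\,n/(\varepsilon\sqrt{T}))$ in one pass (Lemma~\ref{lem:shallow1}) or $\tilde{O}(a(G)\,n/(\varepsilon^{2}T))$ in two passes (Lemma~\ref{lem:shallow2}), and the rank sketches $\tilde{O}(T^{2})$. Balancing the dominant terms, $n/\sqrt{T}=T^{2}$ gives $T=n^{2/5}$ and total $\tilde{O}(a(G)\,n^{4/5}/\varepsilon^{2})$ in one pass, whereas $n/T=T^{2}$ gives $T=n^{1/3}$ and total $\tilde{O}(a(G)\,n^{2/3})$ in two passes; the hypothesis $n\ge(16\alpha/\varepsilon)^{5}$ is there to guarantee $T=n^{2/5}=\Omega((\alpha/\varepsilon)^{2})$, the precondition of Lemma~\ref{lem:shallow1}. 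A union bound over the $O(\log n)$ rank decisions, the heavy sampler, and the Schwartz--Zippel events gives high probability, once the constant-probability estimator of Lemma~\ref{lem:shallow1} is boosted by a median of $O(\log n)$ independent copies (which stays within $\tilde{O}(\cdot)$). \textbf{The step I expect to be the main obstacle} is the calibration at the regime boundary --- arguing that entering the ``large'' branch really does place $\max\{h_{G},s_{G}\}$ far enough above the sampler threshold $T$ that it is $\max\{\widehat{h},\widehat{s}\}$ (and not a spurious $<3T$ report of the smaller of $h_{G},s_{G}$) that is tightly estimated --- together with checking the Schwartz--Zippel correctness of the doubly-sketched, dynamically maintained Tutte matrix; the remaining steps are routine union bounds and the arithmetic of choosing $T$.
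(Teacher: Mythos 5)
Your proposal follows essentially the same route as the paper's proof: dispatch on the Tutte-matrix rank decision (the paper simply cites Clarkson--Woodruff for the $O(k^2\log n)$ sketch you reconstruct), recover small matchings within a factor $2$ by doubling $k$, and in the large regime use Lemma~\ref{lem:match_arboricity} together with the same case analysis showing $\max\{\widehat{h},\widehat{s}\}$ is a $(1\pm\varepsilon)$ estimate of $\max\{h_G,s_G\}$, with the identical choices $T=n^{2/5}$ (one pass) and $T=n^{1/3}$ (two passes) balancing the sampler and sketch space. The calibration at the regime boundary that you flag is handled in the paper exactly as you sketch (the threshold $3T/(1-\varepsilon)$ guarantees the larger of $h_G,s_G$ clears the sampler threshold), so the argument is correct and matches the paper.
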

\begin{proof}
For the sake of simplicity we assume that $3T/(1-\varepsilon)$ is a power of two. We know that we can decide the rank decision problem with parameter $k$ in a dynamic stream with one pass using $O(k^2 \log n)$ space by Theorem 5.1 of Clarkson and Woodruff \cite{CW09}. Thus, invoking this algorithm for $k = 2^0, 2^1, \ldots, 2^{\log 3T/(1-\varepsilon)}$ results in a space requirement of $O(T^2 \cdot \log T \cdot \log n) = O(T^2 \log^2 N)$ for our choices of $T$.
For the first part of the theorem, we estimate $s_G$ and $h_G$ in $1$-pass by $\hat{h}$ and $\hat{s}$ using $\tilde{O}\left(\frac{n}{\varepsilon^2 T}\right)$ and $\tilde{O}\left(\frac{\alpha\cdot n}{\varepsilon \sqrt{T}}\right)$ space, see Lemma \ref{lem:heavy} and Lemma~\ref{lem:shallow1}. Setting $T=n^{2/5}$ gives us the desired space bound of $\tilde{O}\left(\frac{\alpha\cdot n^{4/5}}{\varepsilon^2}\right)$ (note that $T > (16\alpha/\varepsilon)^{2}$ which is required for Lemma \ref{lem:shallow1}). For the second part of the theorem, we can improve the space requirements for the estimator $\hat{h}$ and $\hat{s}$ to $\tilde{O}\left(\frac{a(G) n}{T}\right)$ by using Lemma \ref{lem:heavy} and Lemma \ref{lem:shallow2}. Now, setting $T = n^{1/3}$ gives the desired space bound.

Let $OPT$ be the size of a maximum matching. First, we check whether $OPT \geq 2 \cdot 3T/(1-\varepsilon)$ by invoking the rank decision algorithm with parameter $k = 3T/(1-\varepsilon)$. Since the rank of the matrix is equal to $2OPT$, this decides whether $OPT \geq  2 \cdot 3T/(1-\varepsilon)$. If this is not true, we can give a $2$-approximation on $OPT$ by testing whether the rank of the Tutte matrix is in $[2^i, 2^{i+1})$ for $i = 0, \ldots, \log{(3T/(1-\varepsilon))}-1$. If $OPT \geq 2 \cdot  3T/(1-\varepsilon)$ Lemma \ref{lem:match_arboricity} implies that $\max\lbrace h_G, s_G\rbrace \geq 3T/(1-\varepsilon)$ since $h_G+s_G \geq OPT$. Assuming that we can approximate $\max\lbrace h_G, s_G\rbrace$ then again by Lemma \ref{lem:match_arboricity} we can estimate $OPT$ since 
$$ \dfrac{\max\lbrace h_G, s_G \rbrace}{\eta} \leq OPT \leq h_G+s_G \leq 2\max\lbrace h_G, s_G \rbrace. $$
W.l.o.g. let $\widehat{h} = \arg\max\lbrace \hat{h}, \hat{s} \rbrace$. Now we have two cases:
\begin{enumerate}
\item If $h_G = \arg\max\lbrace h_G, s_G \rbrace \geq T$ then by Lemma \ref{lem:heavy} $\widehat{h}$ is a $(1\pm\varepsilon)$ estimation on $h_G$.
\item If $s_G = \arg\max\lbrace h_G, s_G \rbrace \geq 3T/(1-\varepsilon)$ we know by Lemma \ref{lem:shallow1} that $\widehat{s} \geq 3T$ which implies that $\widehat{h} \geq \widehat{s} \geq 3T$. Thus by Lemma \ref{lem:heavy} $\widehat{h}$ is a $(1\pm\varepsilon)$ estimation on $h_G$. This gives us
$$ (1-\varepsilon) s_G \leq \widehat{s} \leq \widehat{h} \leq (1+\varepsilon)h_G \leq (1+\varepsilon) s_G. $$
\end{enumerate}
Therefore, $\max\lbrace \hat{h}, \hat{s} \rbrace$ is a good estimator for $\max\lbrace h_G, s_G \rbrace$. For the estimator $\frac{\max\lbrace \hat{h}, \hat{s} \rbrace}{(1+\varepsilon) \eta}$ we have
$$ \frac{(1-\varepsilon)}{2(1+\varepsilon)\eta} \cdot OPT \leq \frac{(1-\varepsilon)\max\lbrace h_G, s_G \rbrace}{(1+\varepsilon) \eta} \leq \frac{\max\lbrace \hat{h}, \hat{s} \rbrace}{(1+\varepsilon) \eta} \leq \frac{(1+\varepsilon)\max\lbrace h_G, s_G \rbrace}{(1+\varepsilon) \eta} \leq OPT.$$
%
\end{proof}
\section{Lower Bound}
\label{sec:lower}
Esfandiari et al. \cite{EHL15} showed a space lower bound of $\Omega(\sqrt{n})$ for any estimation better than $3/2$. Their reduction (see below) uses the Boolean Hidden Matching Problem introduced by Bar-Yossef et al.~\cite{BJK08}, and further studied by Gavinsky et al.~\cite{GKKRW08}. We will use the following generalization due to Verbin and Yu \cite{VerW11}.
\begin{definition}[Boolean Hidden Hypermatching Problem \cite{VerW11}]
In the \emph{Boolean Hidden Hypermatching} Problem $BHH_{t,n}$ Alice gets a vector $x \in \lbrace 0,1 \rbrace^{n}$ with $n = 2kt$ and $k \in \mathbb{N}$ and Bob gets a perfect $t$-hypermatching $M$ on the $n$ coordinates of $x$, \ie each edge has exactly $t$ coordinates, and a string $w \in \lbrace 0,1 \rbrace^{n/t}$. We denote the vector of length $n/t$ given by $(\bigoplus_{1 \leq i \leq t} x_{M_{1,i}}, \ldots, \bigoplus_{1 \leq i \leq t} x_{M_{n/t,i}})$ by $Mx$ where $(M_{1,1}, \ldots, M_{1,t}), \ldots, (M_{n/t,1}, \ldots,  M_{n/t,t})$ are the edges of $M$. 
The problem is to return $1$ if $Mx \oplus w = 1^{n/t}$ and $0$ if $Mx \oplus w = 0^{n/t}$, otherwise the algorithm may answer arbitrarily.
\end{definition}
Verbin and Yu \cite{VerW11} showed a lower bound of $\Omega(n^{1-1/t})$ for the randomized one-way communication complexity for $BHH_{t,n}$. For our reduction we require $w=0^{n/t}$ and thus $Mx = 1^{n/t}$ or $Mx = 0^{n/t}$. We denote this problem by $BHH^0_{t,n}$. We can show that this does not reduce the communication complexity.
\begin{lemma}
The communication complexity of $BHH^0_{t,4n}$ is lower bounded by the communication complexity of $BHH_{t,n}$.
\end{lemma}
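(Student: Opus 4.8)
The plan is to reduce $BHH_{t,n}$ to $BHH^0_{t,4n}$ by a padding argument. The idea is that a $BHH^0$ instance forbids the string $w$, so we must \emph{encode} the information contained in $w$ into the $x$-vector and the hypermatching itself. Given a $BHH_{t,n}$ instance $(x, M, w)$ with $n = 2kt$, I would construct a $BHH^0_{t,4n}$ instance on $4n = 8kt$ coordinates as follows. Alice, who knows only $x$, simply produces a longer vector $x' \in \{0,1\}^{4n}$ by appending to $x$ some fixed "control" bits whose values do not depend on $w$; the natural choice is to take $3n$ extra coordinates, split into $n/t$ groups that will later be attached to the $n/t$ hyperedges of $M$. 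Bob, who knows both $M$ and $w$, builds a hypermatching $M'$ on the $4n$ coordinates: for each hyperedge $e_i = (M_{i,1},\dots,M_{i,t})$ of $M$, Bob either keeps $e_i$ as is (if $w_i = 0$) or "flips" it by swapping one of its coordinates for one of the control coordinates chosen so that the parity of the new hyperedge equals the old parity XOR $w_i$ (if $w_i = 1$). The control bits must be arranged (e.g. in blocks where exactly one bit is $1$ and the rest are $0$) so that swapping a single coordinate changes the hyperedge parity by exactly $1$, and so that all $4n$ coordinates get covered by a perfect $t$-hypermatching.

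The key steps, in order: (1) fix the value $n' = 4n$ and check the divisibility constraint $n' = 2k't$ is satisfiable — with $n = 2kt$ we get $4n = 2(4k)t$, so $k' = 4k$ works; (2) describe Alice's map $x \mapsto x'$, which is independent of $w$ and $M$, hence legal for a one-way protocol (Alice still only depends on her own input); (3) describe Bob's construction of $M'$ from $(M,w)$, together with the bookkeeping that ensures $M'$ is a genuine perfect $t$-hypermatching on all $4n$ coordinates; (4) verify the promise transfers correctly, i.e. $M'x' = 1^{n'/t}$ iff $Mx \oplus w = 1^{n/t}$, and $M'x' = 0^{n'/t}$ iff $Mx \oplus w = 0^{n/t}$ — this is just XOR-ing the per-hyperedge control-bit contribution, which equals $w_i$ by design, into $(Mx)_i$; (5) conclude: a one-way protocol for $BHH^0_{t,4n}$ with $c$ bits yields one for $BHH_{t,n}$ with $c$ bits (Alice sends the message for $x'$, Bob interprets it against $M'$), so by the $\Omega(n^{1-1/t})$ lower bound of Verbin and Yu the complexity of $BHH^0_{t,4n}$ is $\Omega(n^{1-1/t})$ as well, i.e. lower bounded by that of $BHH_{t,n}$.

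The main obstacle I expect is step (3), the combinatorial gadget: I need the control coordinates to simultaneously (a) allow a parity flip by a \emph{single} coordinate substitution per hyperedge, (b) be consumable exactly once so hyperedges stay disjoint, and (c) leave a remainder that is itself partitionable into full $t$-hyperedges (otherwise $M'$ is not a perfect hypermatching on $4n$ coordinates). This is why the padding factor is $4$ rather than $2$: with $3n$ spare coordinates distributed as $n/t$ blocks of $3t$ each, one can afford to give each original hyperedge a block to draw its flip-coordinate from and still pad the leftover coordinates of each block into two extra clean $t$-hyperedges of all-zeros (which contribute parity $0$, consistent with $w = 0$ on those). The parity-bookkeeping of step (4) and the protocol simulation of step (5) are then routine; the only real content is arranging the gadget in (3) so that everything lines up.
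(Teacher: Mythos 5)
Your overall framing is right (pad the instance, let Alice's map depend only on $x$, let Bob fold $w$ into the new hypermatching, then transfer the promise), but the gadget you lean on in step (3) does not exist, and you correctly identify it as the crux without resolving it. The problem is that Bob does not know $x$. If Bob ``flips'' a hyperedge by swapping out an original coordinate $M_{i,j}$ and swapping in a control coordinate of fixed value $c$, the parity of that hyperedge changes by $x_{M_{i,j}} \oplus c$, not by $1$: removing a coordinate shifts the parity by the value of the removed bit, which Bob cannot see. So ``swapping a single coordinate changes the hyperedge parity by exactly $1$'' is false whenever the removed bit happens to equal $c$, and Bob has no way to choose the swap to avoid this. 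A second, related failure: the displaced coordinate $M_{i,j}$ must still be covered by some hyperedge of the perfect hypermatching $M'$, and any hyperedge containing it has parity depending on the unknown bit $x_{M_{i,j}}$; your plan to absorb leftovers into ``clean all-zero'' hyperedges only works for the control coordinates, not for these displaced ones. The net effect is that the instance Bob builds can violate the promise of $BHH^0_{t,4n}$ (the hyperedge parities need not be all equal), at which point the protocol being simulated may answer arbitrarily and the reduction proves nothing.

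The paper's proof repairs exactly this point, and the repair is where the factor $4$ really comes from: Alice sends $x' = [x^T\, x^T\, \overline{x}^T\, \overline{x}^T]^T$, i.e. two copies of $x$ and two copies of its bitwise negation (still independent of $w$ and $M$). Now Bob can flip the contribution of any coordinate by selecting its negated copy instead of the original, which changes the parity by exactly $1$ \emph{regardless} of the bit's value, and he replaces each hyperedge of $M$ by four hyperedges that together use each of the $4t$ relevant copy-coordinates exactly once, with an even number of negated positions per hyperedge when $w_l=0$ and an odd number when $w_l=1$ (roles swapped for even $t$). This simultaneously keeps $M'$ a perfect $t$-hypermatching, preserves the promise ($M'x' = 0^{4n/t}$ iff $Mx \oplus w = 0^{n/t}$, and likewise for the all-ones case), and as a side benefit makes $x'$ exactly balanced, which the paper exploits in the subsequent matching reduction. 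Your proposal would need to be rebuilt around negated copies (or some equivalent value-oblivious flip) rather than fixed control bits; as written, step (3) cannot be made to work.
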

\begin{proof}
First, let assume that $t$ is odd. Let $x \in \lbrace 0,1 \rbrace^{n}$ with $n = 2kt$ for some $k \in \mathbb{N}$ and $M$ be a perfect $t$-hypermatching on the $n$ coordinates of $x$ and $w \in \lbrace 0,1 \rbrace^{n/t}$. We define $x' = [x^T x^T \overline{x}^T \overline{x}^T]^T$  to be the concatenation of two identical copies of x and two identical copies of the vector resulting from the bitwise negation of $x$. W.l.o.g. let $\lbrace x_1, \ldots, x_t \rbrace \in M$ be the $l$-th hyperedge of $M$. Then we add the following four hyperedges to $M'$:
\begin{itemize}
\item $\lbrace x_1, \overline{x_2}, \ldots, \overline{x_t} \rbrace, \lbrace \overline{x_1}, x_2, \overline{x_3}, \ldots, \overline{x_t} \rbrace$, $\lbrace \overline{x_1}, \overline{x_2}, x_3, \ldots, x_t \rbrace$, and $\lbrace x_1, \ldots, x_t \rbrace$ if $w_l = 0$,
\item $\lbrace \overline{x_1}, x_2, \ldots, x_t \rbrace, \lbrace x_1, \overline{x_2}, \ldots, x_t \rbrace$, $\lbrace x_1, x_2, \overline{x_3}, \ldots, \overline{x_t} \rbrace$, and $\lbrace \overline{x_1}, \ldots, \overline{x_t} \rbrace$ if $w_l = 1$.
\end{itemize}
The important observation here is that we flip even number of bits in the case $w_l = 0$ and an odd number of bits if $w_l = 1$ (since $t$ is odd).
Since every bit flip results in a change of the parity of the set of bits, the parity does not change if we flip an even number of bits and the parity also flips if we negate an odd number of bits. Therefore, if $w_l$ is the correct (respectively wrong) parity of $\lbrace x_1, \ldots, x_t \rbrace$ then the parity of the added sets is $0$ (respectively $1$), \ie $M'x' = 0^{2n}$ if $Mx \oplus w = 0^{n/2}$ and $M'x' = 1^{2n}$ if $Mx \oplus w = 1^{n/2}$. The number of ones in $x' \in \lbrace 0,1 \rbrace^{4n}$ is exactly $2n$. If $t$ is even, we can just change the cases for the added edges such that we flip an even number of bits in the case $w_l = 0$ and an odd number of bits if $w_l = 1$. Overall, this shows that a lower bound for $BHH_{t,n}$ implies a lower bound for $BHH^0_{t,4n}$.
\end{proof}

\begin{figure}[h]
\centering
\begin{tikzpicture}[scale = 0.9]
\node[circle, draw] (a1) at (0,0) {$v_{1,1~}$};
\node[circle, draw] (a2) at (1.5,0) {$v_{1,2~}$};
\node[circle, draw] (a3) at (3,0) {$v_{1,3~}$};
\node[circle, draw] (a4) at (4.5,0) {$v_{1,4~}$};
\node[circle, draw] (a5) at (6,0) {$v_{1,5~}$};
\node[circle, draw] (a6) at (7.5,0) {$v_{1,6~}$};
\node[circle, draw] (a7) at (9,0) {$v_{1,7~}$};
\node[circle, draw] (a8) at (10.5,0) {$v_{1,8~}$};
\node[circle, draw] (a9) at (12,0) {$v_{1,9~}$};
\node[circle, draw] (a10) at (13.5,0) {$v_{1,10}$};
\node[circle, draw] (a11) at (15,0) {$v_{1,11}$};
\node[circle, draw] (a12) at (16.5,0) {$v_{1,12}$};

\node[circle, draw] (b1) at (0,-2) {$v_{2,1~}$};
\node[circle, draw] (b2) at (1.5,-2) {$v_{2,2~}$};
\node[circle, draw] (b3) at (3,-2) {$v_{2,3~}$};
\node[circle, draw] (b4) at (4.5,-2) {$v_{2,4~}$};
\node[circle, draw] (b5) at (6,-2) {$v_{2,5~}$};
\node[circle, draw] (b6) at (7.5,-2) {$v_{2,6~}$};
\node[circle, draw] (b7) at (9,-2) {$v_{2,7~}$};
\node[circle, draw] (b8) at (10.5,-2) {$v_{2,8~}$};
\node[circle, draw] (b9) at (12,-2) {$v_{2,9~}$};
\node[circle, draw] (b10) at (13.5,-2) {$v_{2,10}$};
\node[circle, draw] (b11) at (15,-2) {$v_{2,11}$};
\node[circle, draw] (b12) at (16.5,-2) {$v_{2,12}$};

\draw (a4) -- (b4);
\draw (a6) -- (b6);
\draw (a7) -- (b7);
\draw (a9) -- (b9);
\draw (a11) -- (b11);
\draw (a12) -- (b12);
\draw(b1) -- (b2);
\draw(b3) -- (b2);
\draw (b1) to[out=-45,in=-135] (b3);

\draw (b5) -- (b4);
\draw (b4) to[out=-45,in=-135] (b7);
\draw (b5) to[out=-45,in=-135] (b7);

\draw (b6) to[out=-45,in=-135] (b10);
\draw (b6) to[out=-45,in=-135] (b11);
\draw (b10)--(b11);

\draw (b8) -- (b9);
\draw (b8) to[out=-45,in=-135] (b12);
\draw (b9) to[out=-45,in=-135] (b12);

\end{tikzpicture}
\caption{Worst case instance for $t=3$. Bob's hypermatching corresponds to disjoint $3$-cliques among the lower nodes and Alice' input vector corresponds to the edges between upper and lower nodes.}\label{fig:lowerboundt}
\end{figure}
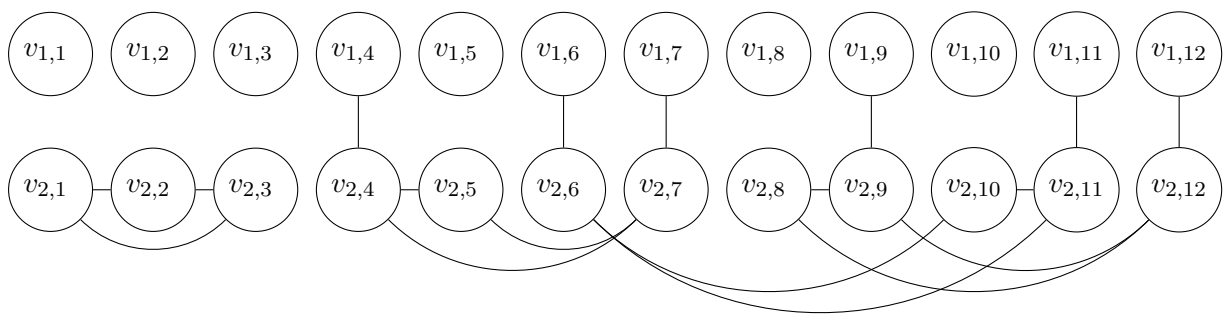

\setcounter{theorem}{1}
\begin{theorem}
Any randomized streaming algorithm that approximates the  maximum matching size within a $1+\frac{1}{3t/2-1}$ factor for $t \geq 2$ needs $\Omega(n^{1-1/t})$ space.
\end{theorem}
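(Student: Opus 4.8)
The plan is to reduce from $BHH^0_{t,4n}$ (which by the preceding lemma is as hard as $BHH_{t,n}$, hence requires $\Omega(n^{1-1/t})$ communication) to the problem of approximating the maximum matching size in a single-pass stream. Given an instance $(x,M)$ of $BHH^0_{t,4n}$ on $4n$ coordinates, Alice and Bob jointly build the graph sketched in Figure~\ref{fig:lowerboundt}: the vertex set is $\{v_{1,j}\}_j \cup \{v_{2,j}\}_j$ with one pair of vertices per coordinate of $x$. For each coordinate $j$ with $x_j = 1$, Alice inserts the ``vertical'' edge $\{v_{1,j}, v_{2,j}\}$ into the stream; these are the edges she can place knowing only $x$. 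Bob then, for each hyperedge $(M_{\ell,1},\ldots,M_{\ell,t})$ of his $t$-hypermatching, inserts a $t$-clique on the lower vertices $\{v_{2,M_{\ell,1}},\ldots,v_{2,M_{\ell,t}}\}$. Since $M$ is a perfect hypermatching, these cliques are vertex-disjoint and partition the lower vertices into $4n/t$ groups of size $t$. Alice runs the streaming algorithm on her edges, sends the memory state to Bob (this is the only communication, of size equal to the space bound), Bob finishes the stream, and reads off the approximate matching size.

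The next step is the combinatorial heart: compute the exact maximum matching size of this graph as a function of how many $x$-ones fall in each hyperedge. Fix a hyperedge $\ell$ and let $a_\ell$ be the number of coordinates $i\in\{M_{\ell,1},\ldots,M_{\ell,t}\}$ with $x_i=1$. Within the ``gadget'' formed by this clique plus the $a_\ell$ vertical edges attached to it, the maximum matching uses each available vertical edge to cover one upper/lower pair, and then matches up the remaining $t - a_\ell$ lower clique-vertices among themselves, contributing $\lfloor (t-a_\ell)/2 \rfloor$ more edges; so the gadget contributes $a_\ell + \lfloor (t-a_\ell)/2\rfloor$. The key point is the parity dependence: when $t$ is such that $t-a_\ell$ is even the floor is exact, and the total matching size over all gadgets is $\sum_\ell \bigl(a_\ell + (t-a_\ell)/2\bigr) = \sum_\ell (t+a_\ell)/2 = \frac{4n}{t}\cdot\frac{t}{2} + \frac{1}{2}\sum_\ell a_\ell$. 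Since $\sum_\ell a_\ell = \#\{j : x_j=1\}$ is fixed (it is $2n$ for our instances by the earlier lemma), the matching size changes by exactly $\pm\frac12$ per hyperedge depending on whether $a_\ell$ is even or odd, i.e. on the parity $\bigoplus_i x_{M_{\ell,i}}$. In the YES case of $BHH^0$ all these parities are $1$ (all $a_\ell$ odd), in the NO case all are $0$ (all $a_\ell$ even), so the two cases give matching sizes differing by $\frac{4n}{t}\cdot\frac12 = \frac{2n}{t}$, while the baseline is roughly $2n + n$ or so — one computes the two values explicitly and finds their ratio is $1 + \Theta(1/t)$. (One must pick the odd/even case of the construction so that $t-a_\ell$ stays even in one case; if $t$ is odd versus even the roles of YES/NO swap, exactly as handled in the previous lemma, and in the borderline parity the floor loses a half but this only helps separate the cases further.)

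With the exact sizes $s_{\mathrm{yes}}$ and $s_{\mathrm{no}}$ in hand, the final step is bookkeeping: their ratio is $1 + \frac{1}{3t/2 - 1}$ after substituting $n = $ (number of gadgets) $\times t/2$ and simplifying, so any streaming algorithm that $(1+\frac{1}{3t/2-1})$-approximates the matching size lets Bob distinguish $s_{\mathrm{yes}}$ from $s_{\mathrm{no}}$ and hence solve $BHH^0_{t,4n}$. The graph has $\Theta(n)$ vertices, so the $\Omega((4n)^{1-1/t}) = \Omega(n^{1-1/t})$ communication lower bound for $BHH^0_{t,4n}$ transfers directly to a space lower bound of $\Omega(n^{1-1/t})$ for the streaming algorithm. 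I expect the main obstacle to be getting the parity analysis and the constant in the approximation ratio exactly right: one has to be careful about the floor function $\lfloor(t-a_\ell)/2\rfloor$ in the gadget, about which of the YES/NO cases corresponds to all-odd versus all-even $a_\ell$ (this depends on the parity of $t$ and is precisely why the ``$0^{n/t}$'' reduction in the previous lemma was set up the way it was), and about bounding the number of vertices to confirm it is $\Theta(n)$ rather than something larger that would weaken the bound. The streaming-to-communication reduction itself and the application of the Verbin–Yu bound are standard once the gadget arithmetic is pinned down.
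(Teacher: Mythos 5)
Your proposal is correct and follows essentially the same route as the paper: the identical gadget (vertical edges for Alice's one-bits, disjoint $t$-cliques on the lower vertices for Bob's hyperedges), the same per-gadget maximum matching value $a_\ell + \lfloor (t-a_\ell)/2\rfloor$, the same use of the fixed number of ones and the parity promise to separate the two cases by $\Theta(n/t)$, and the same ratio $1+\frac{1}{3t/2-1}$ combined with the Verbin--Yu bound via the one-way communication reduction. Your aggregation $\sum_\ell (t+a_\ell)/2$ is just a slightly more compact form of the paper's case split over $M_{2i}$ and $M_{2i+1}$, and your caveat about the parity of $t$ swapping the two cases matches the paper's handling.
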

\begin{proof}
Let $x,M$ be the input to the $BHH^0_{t,n}$ problem, \ie $M$ is a perfect $t$-hypermatching on the coordinates of $x$, $x$ has exactly $n/2$ ones and it is promised that either $Mx = 0^{n/t}$ or $Mx = 1^{n/t}$. We construct the graph for the reduction as described above: For each bit $x_i$ we have two nodes $v_{1,i}, v_{2,i}$ and Alice adds the edge $\lbrace v_{1,i}, v_{2,i} \rbrace$ iff $x_i = 1$. For each edge $\lbrace x_{i_1}, \ldots, x_{i_t} \rbrace \in M$ Bob adds a $t$-clique consisting of the nodes $v_{2,i_1}, \ldots, v_{2,i_t}$. For now, let us assume $t$ to be odd. We know that the matching is at least $n/2$ because $x$ has exactly $n/2$ ones. Since Bob adds a clique for every edge it is always possible to match all (or all but one) nodes of the clique whose corresponding bit is $0$. In the case of $Mx = 0^{n/t}$ the parity of every edge is $0$, \ie the number of nodes whose corresponding bit is $1$ is even. Let $M_{2i} \subseteq M$ be the hyperedges containing exactly $2i$ one bits and define $l_{2i} := \vert M_{2i} \vert$. Then we know $n/2 = \sum_{i=0}^{\lfloor t/2 \rfloor} 2i \cdot l_{2i}$ and $\vert M \vert = n/t = \sum_{i=0}^{\lfloor t/2 \rfloor} l_{2i}$. For every edge in $M_{2i}$ the size of the maximum matching within the corresponding subgraph is exactly $2i + \lfloor (t-2i)/2 \rfloor = 2i + \lfloor t/2 \rfloor - i$ for every $i = 0,\ldots,\lfloor t/2 \rfloor$ (see Fig. \ref{fig:lowerboundt}). Thus, we have a matching of size
$$ \sum_{i=0}^{\lfloor t/2 \rfloor} (2i+(\lfloor t/2 \rfloor - i)) l_{2i} =\dfrac{n}{2}+\dfrac{t-1}{2} \cdot \dfrac{n}{t} - \dfrac{n}{4} = \dfrac{3n}{4} - \dfrac{n}{2t}. $$
If we have $Mx = 1^{n/t}$ then let $M_{2i+1} \subseteq M$ be the hyperedges containing exactly $2i+1$ one bits and define $l_{2i+1} := \vert M_{2i+1} \vert$. Again, we know $n/2 = \sum_{i=0}^{\lfloor t/2 \rfloor} (2i+1) \cdot l_{2i+1}$ and $\vert M \vert = n/t = \sum_{i=0}^{\lfloor t/2 \rfloor} l_{2i+1}$. For every edge in $M_{2i+1}$ the size of the maximum matching within the corresponding subgraph is exactly $2i+1 + (t-2i-1)/2 = 2i+1 + \lfloor t/2 \rfloor -i$ for every $i = 0,\ldots,\lfloor t/2 \rfloor$. Thus, the maximum matching has a size 
$$ \sum_{i=0}^{\lfloor t/2 \rfloor} (2i+1+(\lfloor t/2 \rfloor - i)) l_{2i+1} = \dfrac{n}{2}+\dfrac{t-1}{2} \cdot \dfrac{n}{t}- \dfrac{1}{2} \sum_{i=0}^{\lfloor t/2 \rfloor} (2i+1) \cdot l_{2i+1}+\dfrac{n}{2t} = \dfrac{3n}{4}. $$

For $t$ even, the size of the matching is 
$$ \sum_{i=0}^{t/2} (2i+(t-2i)/2) l_{2i} = \dfrac{n}{2}+\dfrac{t}{2} \cdot \dfrac{n}{t} - \dfrac{n}{4} = \dfrac{3n}{4} $$
if $Mx = 0^{n/t}$. Otherwise, we have
\begin{eqnarray*}
\sum_{i=0}^{t/2} \left(2i+1+\left\lfloor \dfrac{t-2i-1}{2} \right\rfloor \right) l_{2i+1} &= & \dfrac{n}{2}+\sum_{i=0}^{t/2}(t/2-i-1) l_{2i+1} \\
& =& \dfrac{n}{2}-(t/2-1) \cdot \dfrac{n}{t}-\dfrac{n}{4}+\dfrac{n}{2t} = \dfrac{3n}{4}-\dfrac{n}{2t}.
\end{eqnarray*}

As a consequence, every streaming algorithm that computes an $\alpha$-approximation on the size of a maximum matching with 
$$ \alpha < \dfrac{(3/4)n}{((3/4)-1/(2t))n} = 1/(1-4/6t) = 1+\dfrac{1}{3t/2-1} $$
can distinguish between $Mx = 0^{n/t}$ and $Mx = 1^{n/t}$ and, thus, needs $\Omega(n^{1-1/t})$ space. 
\end{proof}
Finally, constructing the Tutte-matrix with randomly chosen entries gives us
\begin{corollary}
Any randomized streaming algorithm that approximates $\textup{rank}(A)$ of $A\in\mathbb{R}^{n\times n}$  within a $1+\frac{1}{3t/2-1}$ factor for $t \geq 2$ requires $\Omega(n^{1-1/t})$ space.
\end{corollary}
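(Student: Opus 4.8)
The plan is to obtain the corollary directly from Theorem~2 through the classical correspondence between maximum matchings and the rank of the Tutte matrix \cite{Tut47,Lov79}. For a graph $G=(V,E)$ on $n$ vertices, let $T(G)$ be the $n\times n$ skew-symmetric matrix whose entry $(i,j)$ with $i<j$ is an indeterminate $z_{ij}$ if $\{i,j\}\in E$ and $0$ otherwise (and entry $(j,i)$ is $-z_{ij}$); then $\mathrm{rank}(T(G))=2\mu(G)$, where $\mu(G)$ denotes the size of a maximum matching of $G$. To turn this into a statement about a real matrix, I would, exactly as in Algorithm~\ref{alg:unweighted_matching1}, substitute for each $z_{ij}$ an independent value $r_{ij}$ chosen uniformly from $\{1,\dots,n^{c}\}$ for a sufficiently large constant $c$, obtaining a matrix $A\in\mathbb{R}^{n\times n}$. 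Substitution can only decrease rank, so $\mathrm{rank}(A)\le 2\mu(G)$ always; conversely, fixing a $2\mu(G)\times 2\mu(G)$ submatrix of $T(G)$ whose determinant is not identically zero, this determinant is a nonzero polynomial of degree at most $n$ in the $z_{ij}$, so by the Schwartz--Zippel lemma it vanishes under the substitution with probability at most $n^{1-c}=o(1)$. Hence $\mathrm{rank}(A)=2\mu(G)$ with probability $1-o(1)$.

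With this in hand I would run the reduction of Theorem~2 unchanged, except that Alice and Bob maintain the entries of $A=T(G)$ rather than the edges of $G$: using shared randomness they first fix the values $r_{ij}$ for all potential edges, and then an insertion of the graph edge $\{i,j\}$ becomes the pair of additive updates that set $A_{ij}=r_{ij}$ and $A_{ji}=-r_{ij}$. This is a legal one-pass stream of linear updates to an $n\times n$ matrix, of the same length as the graph stream of Theorem~2 and indexed by the same vertex set. A streaming algorithm that computes a $\bigl(1+\tfrac{1}{3t/2-1}\bigr)$-approximation of $\mathrm{rank}(A)$ therefore yields, with probability bounded away from failure once we absorb the $o(1)$ error of the rank identity into the algorithm's failure probability, a $\bigl(1+\tfrac{1}{3t/2-1}\bigr)$-approximation of $2\mu(G)$ and hence of $\mu(G)$; by Theorem~2 this distinguishes the two promise cases of the underlying $BHH^{0}_{t,\Theta(n)}$ instance and so requires $\Omega(n^{1-1/t})$ bits. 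Since the matrix dimension is within a constant factor of the number of graph vertices, the bound translates to $\Omega(n^{1-1/t})$ in the matrix size.

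The only point that needs a little care, and the closest thing to an obstacle, is that the randomized substitution must not undermine a space lower bound: since such a bound is proved via a hard input distribution, I would phrase everything in the public-coin one-way communication model, where the values $r_{ij}$ are simply shared random bits, so that simulating a rank-approximation streaming algorithm produces a valid protocol for $BHH^{0}_{t,n}$ whose error is the sum of the algorithm's error and the $o(1)$ Schwartz--Zippel failure probability. Apart from this, the computation of the approximation factor and the accounting of $n$ are identical to the proof of Theorem~2, so no new estimates are required.
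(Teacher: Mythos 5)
Your proposal is correct and follows exactly the route the paper intends: the paper's entire argument for this corollary is the single remark ``constructing the Tutte-matrix with randomly chosen entries gives us'' the bound, i.e.\ reduce from Theorem~2 via $\mathrm{rank}(T(G))=2\mu(G)$, random substitution with Schwartz--Zippel, and simulation of edge insertions by additive updates to the matrix entries. Your write-up simply supplies the details (shared randomness for the substituted values, absorbing the $o(1)$ failure probability) that the paper leaves implicit, so there is nothing to correct.
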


%



\bibliography{literature-short}{}
\bibliographystyle{plain}

\end{document}